\newcommand{\yy}{{\mathbf y}}
\newcommand{\xx}{{\mathbf x}}
\newcommand{\esp}{{ \mathbb E}}
\newcommand{\argmin}{\mathop{\rm arg\,min}}
\newtheorem{thm}{Theorem}
\newtheorem{lem}{Lemma}
\newtheorem{prop}{Proposition}
\numberwithin{equation}{section}
\begin{document}

\title{Bus operators in competition: a directed location approach}

\author[Fernanda Herrera]{Fernanda Herrera $\dag$ }
\author[Sergio I. L\'opez]{Sergio I. L\'opez*}

\address[ $\dag$ ]{School of Global Policy and Strategy\\ University of California San Diego\\ CA 92093}
\address[*]{Departamento  de Matem\'aticas\\ Facultad de Ciencias, UNAM\\
	C.P. 04510, Ciudad de M\'exico, M\'exico}

\email[ $\dag$ ]{fherrera@ucsd.edu}
\email[*]{silo@ciencias.unam.mx}

\thanks{Fernanda Herrera gratefully acknowledges financial support from the University of California Institute for Mexico and the United States (UC MEXUS). Sergio I. L\'opez was partially funded by Conacyt-SNI 215989 grant.}
\date{\today}

\keywords{Transport, Bus, Location games, Nash equilibrium, Mixed Strategies}
\subjclass{C710, C720, C730, R410}

\begin{abstract}
We present a directed variant of Salop's (1979) model to analyze bus transport dynamics. Players are operators competing in both cooperative and non-cooperative games. Utility, like in most bus concession schemes in emerging countries, is proportional to the total fare collection. Competition for picking up passengers leads to well documented and dangerous driving practices that cause road accidents, traffic congestion and pollution. We obtain theoretical results that support the existence and implementation of such practices, and give a qualitative description of how they come to occur. In addition, our results allow to compare the base transport system with a more cooperative one.
\end{abstract}

\maketitle

\section{Introduction} 
In this work, we model the competition of bus operators for passengers in a public transport concession scheme. The models -which are directed variants of the Salop model \cite{Salop}, in turn a circuit adaptation of the classic Hotelling model \cite{Hot}- are a characterization of Mexico City's transport system. According to a 2017 survey, 74.1\% of the trips made in Mexico City by public transport are carried out on buses with concession contracts \cite{EOD2017}.

Much like in other Latin American cities, the contracts that lay the responsibilities, penalties and service areas, are rarely enforced by the corresponding authorities, and \textit{in these instances, the main driver determining the planning and operations tend to be the operator's profit margins} \cite{Comparative} p. 9. Leaving the task to companies or even drivers themselves, has lead to what \cite{Foster} refer to as \textit{curious old practices}: driving habits adopted by bus operators, whose salary is proportional to the fare collection, to maximize the number of users boarding the unit. While these practices were observed and recorded in the United Kingdom in the 1920s, they are very much present today, particularly in cities with emerging economies and sub optimal concession plans. The practices enlisted in \cite{Foster} pertaining to driving are:

\begin{enumerate}
	\item \textit{Hanging back or Crowling}. Operators drive slowly to pick up as many people as possible. The idea is that long waiting times increase the number of passengers waiting at stops. A variant is to stop altogether until the bus is fully loaded, or the next bus catches up.
	\item \textit{Racing}. When an operator deems that the number of passengers waiting at a stop is not worth making the stop. In this case, she continues driving in the hopes of collecting more users ahead.

	\item \textit{Overtaking, Tailing or Chasing}. 
	Attempting to pass the bus ahead, to cut in and pick up the passengers frontwards.
	
	\item \textit{Turning}. When an empty or nearly empty bus turns around before the end of the route, and drives back in the opposite direction. 
\end{enumerate}

Many of these practices have negative consequences on the service provided to users, and as a byproduct, on the perception of public transport. In the 2019 survey on victimization in public transport \cite{EV2019}, carried out in Mexico City and its metropolitan area, 50\% of the interviewees deemed the quality of concession transport to be bad, and 15\% very bad. Moreover, 27\% considered that traveling in concession transport was somewhat dangerous, and 60\% very dangerous. In both of these dimensions, concession public transport did worse than any other form of transport, including public and private types. The matter is pressing enough that the current administration of Mexico City stressed in its Strategic Mobility Plan of 2019 \cite{PEM2019} p. 9: \textit{The business model that governs this (transport) sector, (...) produces competition in the streets for users, which results in the pick up and drop off of passengers in unauthorized places, increased congestion and a large number of traffic incidents each year.}

A solution to these problems may involve the deregulation of public transport to increase competition between providers, and to create incentives for providing a differentiated product, namely better service in the form of shorter waiting times, and safer driving practices. As an example, Margaret Thatcher introduced the Transport Act 1985 \cite{Acta}, which lead to the privatization of bus services, higher competition between companies, and a set of norms to abide by, like keeping vehicles in good condition, avoiding dangerous driving, and establishing routes and publishing timetables. However successful, this type of measure seems unlikely for Mexico and other developing Latin American countries, both for legislative reasons and corruption in the implementation. So, with this work we aim to shed light on the implications of a transport system where operators compete for passengers without regulation.

To be specific, we model the situation where bus operators compete to maximize their utility, which is proportional to the number of passengers boarding the units. As a proxy for the number of passengers collected, we use the road ahead up to the next bus. The strategies available to drivers are the driving speeds. Time, like strategies themselves, is continuous. For simplicity, we do not allow drivers to change speed any time they want, instead we assume that they maintain a chosen speed for a given time, and let them change in the next. While practical, the assumption also reflects the empirical observation that bus drivers make strategic stops along the road, where they obtain information on the game. More precisely, they pay agents that collect the arrival times of previous buses to that particular stop, and even the identity of the drivers themselves. This way, the operators realize whether they are competing against known drivers, and more importantly, whether they changed their speed. With this information, they make their decision for the next part of the route. We obtain a simple interpretation of the results that is consistent with the driving practices mentioned above.

To the best of our knowledge, our approach is novel, and it allows us to model a variety of scenarios and obtain explicit descriptions of equilibria. Furthermore, we are able to explore the time evolution of the adopted strategies. All the results are expressed in terms of the behavior of the operators. Given the tractability of our models, some natural theoretical questions emerge.

Relevant literature on transport problems includes \cite{Newell} modeling of the optimal headway bus service from the point of view of a central dispatcher. In the historical context of Transport Act 1985 \cite{Acta}, several scientific articles analyzed the effect of the privatization. Under the assumption of the existence of an economic equilibrium in the competition system, \cite{Foster} classify the driving practices into two categories: those consistent with the equilibrium, and those who are not. They analyzed the expected timetables in the deregulated scenario. In \cite{Evans} a comparative analysis of fare and timetable allocation in competition, monopoly and net benefit maximization (both restricted and unrestricted to a zero profit) models is presented. Building on from this, \cite{Ireland} introduces the consumer's perspective and obtains the equilibria prices and number of services offered by transport companies. The possibility of predatory behavior between two enterprises competing through fares and service level, is analyzed by \cite{Dodgson}, using the data from the city of Inverness. In \cite{Ellis} the authors study the optimal policies of competing enterprises in terms of fares, and the bus service headway, in a unique bus stop and destination scenario. They also introduce the concept of \textit{demand coordination} which can be implemented through timetables. Assuming a spatial directed model with a single enterprise, \cite{dePalma} finds the timetable that minimizes the costs associated to service delays. The work of \cite{Borenstein} analyzes flight time data and finds empirical evidence to support Hotelling models. From a non-economic perspective, \cite{Baik} models competing buses in a circuit behaving like random particles with repulsion between them (meaning they could not pass each other). A contemporary review on transport market models using game theory is given by \cite{Adler}, and a general review of control problems which arise in buses transport systems is presented in \cite{Ibarra}.\\

This paper is organized as follows. In Section \ref{Sec:Model}, we present the general model, and the single and two player games. Relevant definitions, notation and interpretations are introduced. In Section \ref{Sec:Results} we present the solutions to the games and include in Subsection \ref{subsect:time} the evolution of the strategies adopted by the operators. That is, we look at the long-run equilibria of the games. We also introduce a natural extension of the two player games and present the results in Subsection \ref{sub:ext}. Concluding remarks are in \ref{Sec:Conclu}, and proofs are in Appendix \ref{app}. 

\section{The model}\label{Sec:Model}

The assumptions of the game are the following. There are $n\leq2$ buses, each is driven by one of $n$ operators along a route. There is only one type of bus and one type of driver, meaning that the buses have identical features, and that the drivers are homogeneous in terms of skill and other relevant characteristics. 

The speed of a bus, denoted by $v$, is bounded throughout every time and place of the road by:
\begin{equation}\label{speedbound}
	0 < v_{min}  \leq v \leq v_{max} , 
\end{equation}
where the constants $v_{min}$ and $v_{max}$ are fixed, and determined by exogenous factors like the condition of the bus, Federal and State laws and regulations, the infrastructure of the road, etc.

Drivers can pick up passengers along any point on the route at any given time. In other words, there are no designated bus stations, nor interval-based time schedules in place. This scenario is an approximation to a route with a large number of homogeneously distributed bus stops.

We allow for infinite bus capacity, so drivers can pick up any number of passengers they come across. Alternatively, one can assume that passengers alight from the bus almost right after boarding it, so the bus is virtually empty and ready to pick up users at any given time. The important point to note is that passengers that have boarded a bus will not hop on the next, either because they never descended it in the first place, or because they already reached their final destination if they did.

Bus users reach their pick up point at random times, so demand for transport is proportional to the time elapsed between bus arrivals. Let $\lambda >0$ denote the mean number of passengers boarding a bus per unit of time, and let $p \geq 0$ denote the fixed fare paid by each user. We assume that there is a fixed driving cost $c \geq 0$ per unit of time. This cost summarizes fuel consumption, maintenance, protection insurance for the bus and passengers, etc. 

The operators get a share of the total revenue, and consequently seek to maximize it. Since they cannot control the number of passengers on the route, the fare, or the driving costs, the only resource available to them is to set the driving speed, which we assume remains constant throughout the time interval $[0,T]$, with $T>0$. The strategy space of a bus driver is then
\begin{equation}\label{space}
	\Gamma=  \{  v \geq 0  :  v_{min} \leq  v \leq v_{max}  \} ,
\end{equation}
where $v_{min}$ and $v_{max}$ are given in \eqref{speedbound}. We define a mixed strategy, $X$ or $Y$, to be a random variable taking values in the space $\Gamma$. 

In what follows we define the expected utility of drivers given a set of assumptions on the number of players and their starting positions, the fixed variables of the models, and route characteristics. Relevant notation and concepts are introduced when deemed necessary.

\subsection{Single player games} We first consider a game with only one driver picking up passengers along the road. Importantly, the fact that only one bus is covering the route implies that commuters have no option but to wait for its arrival, the player is aware of this.

\begin{itemize}
	\item \textbf{Fixed-distance game} 
	
	A single bus departs the origin of a route of length $D$. We adopt the convention that the initial time is whenever the bus departs the origin. We define the expected utility of driving at a given speed $v$ to be
	\begin{equation}\label{1-dist-uti}
		u(v):= (p \lambda) T  - c T,
	\end{equation}
	where $T= \frac{D}{v} $ is the time needed to travel the distance $D$ at speed $v$. \\
	
	Note that since there is no other bus picking up passengers, the expected number of people waiting for the bus in a fixed interval of the road increases proportionally with time. From this, one infers that the expected total number of passengers taking the bus is proportional to the time it takes the bus to reach its final destination.\footnote{This justifies the first summand in \eqref{1-dist-uti}.} The conclusion and its implication can be expressed rigorously using a space-time Poisson process, see for example \cite{Gelf} pp. 283-296.\\
	
	\item \textbf{Fixed-time game}
	
	Suppose now that the driver chooses a constant speed $v$ satisfying \eqref{speedbound} in order to drive for $T$ units of time. The bus then travels the distance $D=Tv$, which clearly depends on $v$. We define the expected utility of driving at a given speed $v$ to be
	\begin{equation}\label{1-time-uti}
		u(v):= (p \lambda) D - c T.
	\end{equation}
	The underlying assumption is that for sufficiently small $T$, there are virtually no new arrivals of commuters to the route, so effectively, the number of people queuing for the bus remains the same as that of the previous instant. The requirement is that $T$ is small compared to the expected interarrival times of commuters.
	
	It follows that the total amount of money collected by the driver is proportional to the total distance traveled by the bus.
	
\end{itemize}

\subsection{Two player games} There are two buses picking up passengers along a route, which we assume is a one-way traffic circuit. An advantageous feature of circuits is that buses that return from any point on the route to the initial stop may remain in service; this is generally not the case in other types of routes. In particular, we assume that the circuit is a one-dimensional torus of length $D$. For illustration purposes and without loss of generality, from now on we require the direction of traffic to be clockwise.

We define the $D$-module of any real number $r$ as
$$(r)_{mod \, D} := \frac{r}{D} - \Big \lfloor \frac{r}{D} \Big \rfloor , $$
where $\lfloor z \rfloor$ is the greatest integer less than or equal to $z$.

The interpretation of $(r)_{mod \, D}$ is the following: if starting from the origin, a bus travels the total distance $r$, then $(r)_{mod \, D}$ denotes its relative position on the torus. Indeed, $r$ may be such that the bus loops around the circuit many times, nonetheless $(r)_{mod \, D}$ is in $[0,D)$ for all $r$. We refer to $r$ as the absolute position of the bus, and to $(r)_{mod \, D}$ as the relative position (with respect to the torus). Note that the origin and the end of the route share the same relative position, since $(0)_{mod \, D}=0=(D)_{mod \, D}$.

Let $\xx$ and $\yy$ denote the two players of the game, and let $x, y$ be their respective relative positions. The directed distance function $d_{\xx}$ is given by
\begin{eqnarray}\label{def:distance}
	d_{\xx}(x,y) := \left\{ \begin{array}{ll}
		y - x &  \textrm{ if } x \leq y, \\
		D+ y - x & \textrm{ if } x > y .
	\end{array} \right. 
\end{eqnarray}

Equation \eqref{def:distance} has a key geometrical interpretation: it gives the distance from $x$ to $y$ considering that traffic is one-way. The interest of this is that the potential amount of commuters $\xx$ picks up is proportional to the distance between $x$ and $y$, namely $d_{\xx}(x,y)$. See Figure \ref{Fig:directed}.

A straightforward observation is that for any real number $r$, we have 
\begin{equation}\label{loopinv}
	d_{\xx}( (x+r)_{mod \, D}, (y+r)_{mod \, D} ) =d_{\xx}(x,y) . 
\end{equation}
This asserts that if we shift the relative position of the two players by $r$ units (either clockwise or counterclockwise, depending on the sign of $r$), then the directed distance $d_{\xx}$ is unchanged.

One can define the directed distance $d_{\yy}$ analogously,
\begin{displaymath}
	d_{\yy}(x,y) := \left\{ \begin{array}{ll}
		x - y &  \textrm{ if } y \leq x, \\
		D+ x - y & \textrm{ if } y > x .
	\end{array} \right. 
\end{displaymath}

By definition, there is an intrinsic symmetry between $d_{\xx}$ and $d_{\yy}$: we have $d_{\xx}(x,y) = d_{\yy}(y,x)$ and $d_{\yy}(x,y) = d_{\xx}(y,x)$. Roughly speaking, this means that if we were to swap all the labels, namely $\xx$ to $\yy$, $x$ to $y$,\footnote{Importantly, this switches the relative positions of the players.} and vice versa, then it suffices to plug the new labels into the previous definitions to obtain the directed distances.

Another immediate observation is that for any pair of different positions $(x,y)$, the sum of the two directed distances gives the total length of the circuit,
\begin{equation}\label{dist-const}
	d_{\xx}(x,y)+d_{\yy}(x,y) =D.
\end{equation}
This is portrayed in Figure \ref{Fig:directed}. \\

\begin{figure}[h!]
	\includegraphics[width=5cm,height=5cm]{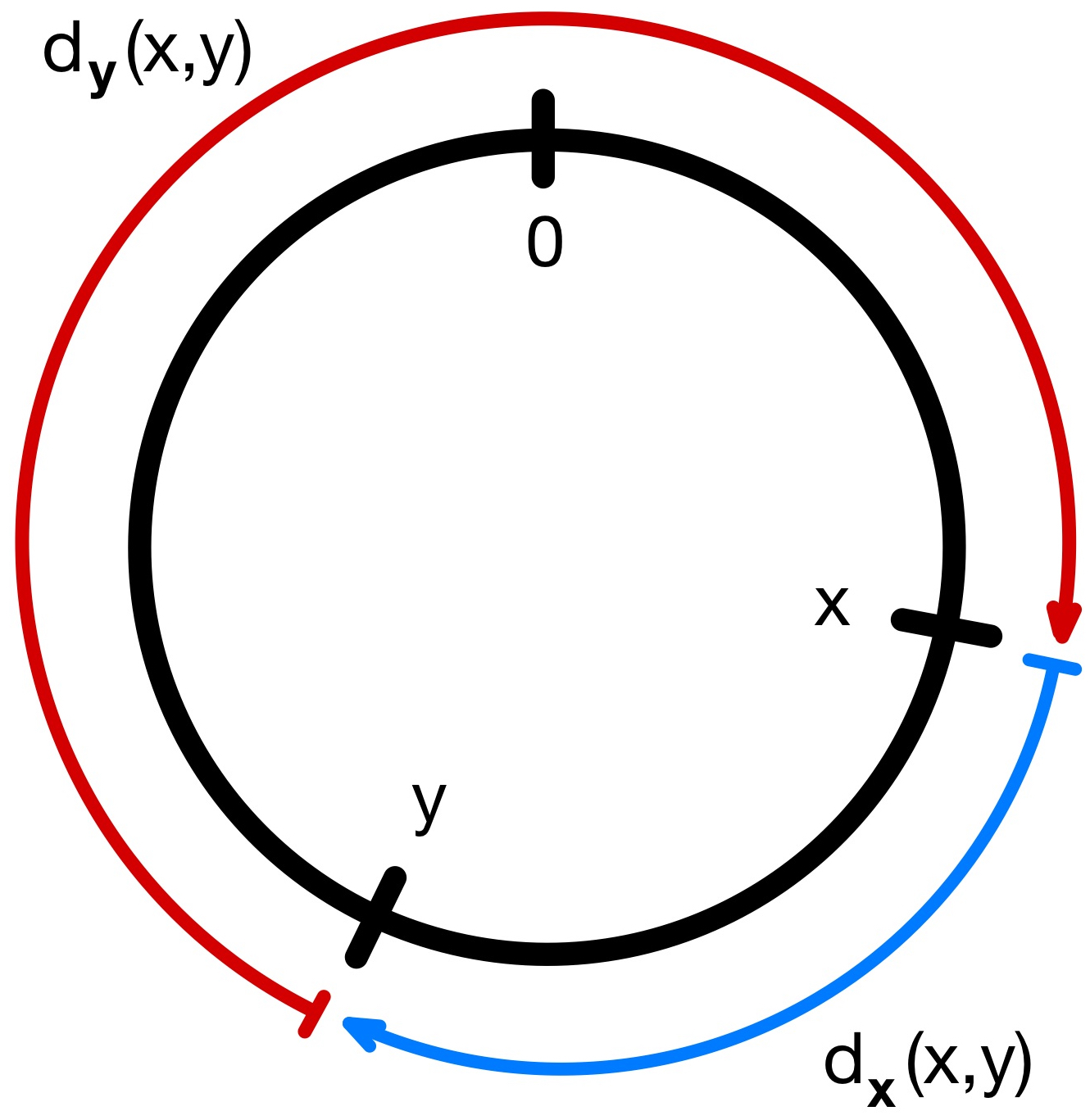}
	\caption[Directed distances]{Directed distances}\label{Fig:directed}
\end{figure}

Let us assume that players $\xx$ and $\yy$ have starting positions $x_0$ and $y_0$ in $[0,D)$. The \textit{initial minimal distance} is defined to be 
\begin{equation}\label{initial distance}
	d_0:= \min \{ d_{\xx} (x_0,y_0)  , d_{\yy} (x_0,y_0) \}.
\end{equation}

Now suppose that starting from $x_0$ and $y_0$, the operators drive at the respective speeds $v_{\xx}$ and $v_{\yy}$, with $v_{\xx},v_{\yy}$ in $\Gamma$, for $T$ units of time. Their final relative positions are then
\begin{equation*}\label{finalposition}
	x_T= (x_0 +  T v_{\xx} )_{mod \, D} \qquad \text{and}  \qquad y_T =  (y_0 +  T v_{\yy} )_{mod \, D}.
\end{equation*}

We orient the maximum displacement of buses by requiring $T v_{max}$, with $v_{max}$ given in \eqref{speedbound}, to be small compared to $D$. The reason for this is to be consistent with our assumption of constant speed strategies, since they are short-term. More precisely, we require
\begin{equation}\label{assump:short}
	T v_{max}   < \frac{D}{2}.
\end{equation}

Lastly, we define the \textit{escape distance} by
\begin{equation}\label{escapedist}
	d:=T(v_{max}-v_{min}). 
\end{equation}
This gives a threshold such that if the distance between the players is shorter than $d$, then the buses can catch up to each other, given the appropriate pair of speeds. If the distance is greater than $d$, this cannot occur.

We now proceed to define the expected utility of players given the type of game being played, namely, whether it is cooperative or non-cooperative.

\begin{itemize}
	\item \textbf{Non-cooperative game}
	
	We define the utility of $\xx$ given the initial positions of players $x_0$ and $y_0$, and the strategies $v_{\xx}$ and $v_{\yy}$, to be
	\begin{equation}\label{2-noncoop-utilx}
		u_{\xx}(x_0,v_{\xx},y_0,v_{\yy}) := \left\{ \begin{array}{cc}
			p \lambda \, d_{\xx}(x_T,y_T) - c T  &  \textrm{ if } x_T \neq y_T, \\
			p \lambda \, \frac{D}{2}- c T  & \textrm{ if } x_T=y_T .
		\end{array} \right. 
	\end{equation}
	
	The definition above includes two summands: the first one gives the (gross) expected income of $\xx$, since the factor $ p \lambda $ is the expected income per unit of distance. The second term gives the total driving cost. 
	
	It is worth pointing out that for simplicity, we have assumed that the expected income depends only on the relative final positions $x_T$ and $y_T$. A more precise account would consider the entire trajectory of the buses. Nevertheless, even if this could be described with mathematical precision, the model would grow greatly in complexity without adding to its economic interpretation. 
	
	Similarly, we define
	\begin{equation}\label{2-noncoop-utily}
		u_{\yy}(x_0,v_{\xx},y_0,v_{\yy}) := \left\{ \begin{array}{ll}
			p \lambda \, d_Y(x_T,y_T) - c T  &  \textrm{ if } x_T \neq y_T, \\
			p \lambda \, \frac{D}{2}- c T  & \textrm{ if } x_T=y_T .
		\end{array} \right. 
	\end{equation}
	
	By equation \eqref{dist-const} and the definition of the utility functions \eqref{2-noncoop-utilx}, \eqref{2-noncoop-utily}, the sum $u_{\xx} + u_{\yy}$ is a constant that does not depend on the driving speeds nor on the initial positions. For this reason, we analyze the game as a zero-sum game.
	
	\item \textbf{Cooperative game}
	
	Players aim to maximize the collective payoff, and this amounts to solving the global optimization of the sum $U_{\xx} +U_{\yy}$, which includes the utility functions in the non-cooperative game \eqref{2-noncoop-utilx} and \eqref{2-noncoop-utily}. Since the non-cooperative game is a zero-sum game, we introduce an extra term in the utility, which gives the discomfort players derive from payoff inequality. This assumption can be imagined in a situation where equity in payments is desirable, specially since players have complete information. 
	
	We define the utility function to be
	\begin{equation}\label{coop-util}
		u(x_0,v_{\xx},y_0,v_{\yy}) := u_{\xx}(x_0,v_{\xx},y_0,v_{\yy}) + u_{\yy}(x_0,v_{\xx},y_0,v_{\yy}) - k | u_{\xx}(x_0,v_{\xx},y_0,v_{\yy}) - u_{\yy}(x_0,v_{\xx},y_0,v_{\yy}) |,
	\end{equation}
	where $k$ is a non-negative constant, and all the other elements are the same as in the non-cooperative game.	
\end{itemize}~\\

\subsubsection{Mixed strategies and $\varepsilon$-equilibria}~\\

For the solution of two player games, it is convenient to define the expected utility of randomizing over the set of strategies. We also introduce the definition of $\varepsilon$-equilibrium.

Suppose that players $\xx$ and $\yy$ use the mixed strategies $X$ and $Y$.\footnote{Recall that a mixed strategy is a random variable taking values in the set $\Gamma=\{v\geq0:v_{min}\leq v\leq v_{max}\}$.} We define the utility of player $\xx$ to be 
$$ U_{\xx} ( x_0,X,y_0,Y)  := \mathbb{E} [ u_{\xx} ( x_0,X,y_0,Y)  ].$$
An analogous definition can be derived for player $\yy$.

Let $\varepsilon>0$. We say that a pair of pure strategies $(v^*_{\xx},v^*_{\yy})$ is an $\varepsilon$-equilibrium if for every $v_{\xx}$ and $v_{\yy}$ we have
$$  u_{\xx} (x_0,v_{\xx},y_0, v^*_{\yy} )   \leq u_{\xx} (x_0,v^*_{\xx},y_0, v^*_{\yy} ) + \varepsilon ,$$
and
$$ u_{\yy} (x_0,v^*_{\xx},y_0, v_{\yy} )   \leq u_{\yy} (x_0,v^*_{\xx},y_0, v^*_{\yy} ) + \varepsilon .$$

This means that any unilateral deviation from the equilibrium strategy leads to a gain of no more than $\varepsilon$; this is why an $\varepsilon$-equilibrium is also called near-Nash equilibrium. Note that in particular, an $\varepsilon$-equilibrium with $\varepsilon=0$ gives the standard definition of Nash equilibrium. However, an $\varepsilon$-equilibrium for all $\varepsilon$ sufficiently small, need not be a Nash equilibrium, specially if the utility function is discontinuous, which is our case.

A mixed strategies $\varepsilon$-equilibrium $(X,Y)$ is similarly defined by replacing the utility functions with the expected utility functions in the last definition. 

\section{Results}\label{Sec:Results}

In what follows, we analyze the speeds that drivers choose, both in the short and long-run. Results on the short term are crucial to the analysis, as implementing the optimal short-term strategies over a long period of time, gives the long-term solution to the games.

\subsection{Single player games}
The single player games have pure strategy Nash equilibria. Although the results are immediate, we include them in the analysis for completeness and ease of interpretation.

\begin{prop}\label{theo:1play}
	Let $v^*$ in $\Gamma$ be the driving speed that maximizes the utility of the driver. We provide an explicit description of $v^*$.
	\begin{enumerate}
		\item[a)] Fixed-distance game. Given the utility function defined in \eqref{1-dist-uti}, we have
		
		\begin{align*}
			v^*=\begin{cases}
				v_{min}& \text{if $p\lambda>c$},\\
				v_{min}\leq v\leq v_{max}& \text{if $p\lambda=c$},\\
				v_{max}& \text{if $p\lambda<c$}.
			\end{cases}
		\end{align*}
		\item[b)] Fixed-time game. Given the utility function defined in \eqref{1-time-uti}, we have $v^*=v_{max}$.
	\end{enumerate}
\end{prop}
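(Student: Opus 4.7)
\medskip

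\noindent\textbf{Proof plan for Proposition \ref{theo:1play}.} The plan is to substitute the time-distance relation directly into each utility function, reducing the optimization to a one-variable monotonicity analysis over the compact interval $\Gamma=[v_{min},v_{max}]$. No fixed-point or equilibrium argument is needed, since there is a single player; the Nash condition reduces to maximizing a real-valued function of $v$.

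\smallskip

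\noindent\textbf{Step 1 (Fixed-distance game).} Using $T=D/v$, rewrite \eqref{1-dist-uti} as
\begin{equation*}
u(v) = (p\lambda - c)\,\frac{D}{v}.
\end{equation*}
Since $D>0$ and $v\mapsto 1/v$ is strictly decreasing on $\Gamma$, the sign of the coefficient $p\lambda-c$ determines the monotonicity of $u$. When $p\lambda>c$, $u$ is strictly decreasing, so the maximum on $\Gamma$ is attained at the left endpoint $v_{min}$. When $p\lambda<c$, $u$ is strictly increasing, so the maximum is at $v_{max}$. When $p\lambda=c$, $u$ is identically zero on $\Gamma$, so every $v\in\Gamma$ maximizes it. This gives exactly the three cases in (a).

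\smallskip

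\noindent\textbf{Step 2 (Fixed-time game).} Here $T$ is fixed and $D=Tv$ is the resulting traveled distance. Substituting into \eqref{1-time-uti} yields
\begin{equation*}
u(v) = p\lambda T v - cT = T\,(p\lambda v - c),
\end{equation*}
which is affine in $v$ with slope $p\lambda T\ge 0$. Since $p,\lambda,T>0$ under the standing assumptions of the model, the slope is strictly positive, so $u$ is strictly increasing on $\Gamma$ and its maximum is attained at the right endpoint, $v^{*}=v_{max}$. This proves (b).

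\smallskip

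\noindent\textbf{Main obstacle.} There is essentially no obstacle: both statements reduce to inspecting the sign of a coefficient in a monotone one-dimensional function on a closed interval. The only subtlety worth flagging is interpretive rather than technical, namely that in (a) the driver's incentive to crawl (pick $v_{min}$) arises precisely when the per-unit-time revenue $p\lambda$ exceeds the per-unit-time cost $c$, which is the regime where the curious practice of \emph{hanging back} described in the introduction is rationalized by the model; this link is what makes the otherwise trivial computation worth stating explicitly.
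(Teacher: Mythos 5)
Your proposal is correct and follows essentially the same argument as the paper: both reduce each game to a one-dimensional monotonicity check, with the sign of the net income per unit time $p\lambda - c$ deciding the fixed-distance case and the increasing dependence on $v$ deciding the fixed-time case. The only cosmetic difference is that you write the substitutions $T=D/v$ and $D=Tv$ out algebraically while the paper states the same reasoning verbally; note only that the paper assumes $p\geq 0$ rather than $p>0$, so in the degenerate case $p\lambda=0$ the fixed-time utility is constant and $v_{max}$ is a (non-unique) maximizer, a triviality both your argument and the paper's gloss over equally.
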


\begin{proof}
	 Note that in the fixed-distance game, $p \lambda - c$ gives the driver's expected net income per unit of time. If this amount is positive, then the player maximizes her utility by driving for the longest time, or equivalently, by driving at the lowest possible speed. Conversely, a negative expected net income leads to driving at the highest speed. Lastly, a null expected income makes the driver indifferent between any given speed in the range.
	
	In the fixed-time game, the total revenue is proportional to the traveled distance, so the driver maximizes her utility by driving at the highest speed.
\end{proof}

\subsection{Two-player games}

The strategies adopted by the players strongly depend on the initial minimal distance defined in \eqref{initial distance}. We cover all cases.

\begin{thm}\label{theo:2noncoop} \textbf{Non-cooperative game.} 
	Without loss of generality we can assume $d_0=d_{\xx}(x_0,y_0)$. 
	\begin{enumerate}
		\item[a)] If $d_0=0$, that is, if the initial positions of the players are the same, then the pair of strategies $(v_{max},v_{max})$ is the only Nash equilibrium.
		\item[b)] If $0<  d_0< d < d_{\yy}(x_0,y_0)$, with $d$ the escape distance in \ref{escapedist}, then for sufficiently small $\varepsilon$, the mixed strategy $\varepsilon$-equilibria $(X,Y)$ is
		\[
		\begin{minipage}{.35\linewidth}
		\centering
		\begin{align*}
		X=\begin{cases}
		v_{min} & \text{with probability } 1-\frac{d-d_0}{D},\\
		U & \text{with probability } \frac{d-d_0}{D}
		\end{cases}
		\end{align*}
		\end{minipage}
		\quad \text{and} \quad
		\begin{minipage}{.35\linewidth}
		\centering
		\begin{align*}
		Y=\begin{cases}
		v_{min}& \text{with probability } q_1,\\
		V & \text{with probability } q_2,\\
		v_{max} -\frac{d_0}{T}+\frac{\varepsilon}{T} & \text{with probability } 1 - \frac{d}{D},
		\end{cases}
		\end{align*}
		\end{minipage}
		\]
		where $U$ is a uniform random variable on $\Big( v_{min}+ \frac{d_0}{T}, v_{max} \Big)$, $q_1$ and $q_2$ are non-negative numbers such that $q_1+q_2= \frac{d}{D}$ and $q_2 \leq \frac{d-d_0}{D}$, and $V$ is a uniform random variable on 
		$\Big( v_{max} - \frac{d_0}{T} - q_2 \frac{D}{T}, \, v_{max} - \frac{d_0}{T} \Big)$.
		
		In other words, $X$ has an atom at $v_{min}$, and $Y$ has two atoms at $v_{min}$ and $v_{max} -\frac{d_0}{T}+\frac{\varepsilon}{T}$, and are otherwise uniformly distributed over their respective intervals.
		
		\item[c)] If  $0< d= d_0< d_{\yy}(x_0, y_0)$, then for sufficiently small $\varepsilon$, the mixed strategy $\varepsilon$-equilibria is
		\[
		\begin{minipage}{.3\linewidth}
		\centering
		\begin{align*}
		X=\begin{cases}
		v_{min}& \text{with probability } 1-\frac{2 \varepsilon}{D},\\
		v_{max} & \text{with probability } \frac{2 \varepsilon}{D}
		\end{cases}
		\end{align*}
		\end{minipage}
		\quad \text{and} \quad
		\begin{minipage}{.3\linewidth}
		\centering
		\begin{align*}
		Y=\begin{cases}
		v_{min}& \text{with probability } \frac{2 d}{D},\\
		v_{min} + \frac{\varepsilon}{T}& \text{with probability } 1- \frac{2 d}{D}.
		\end{cases}
		\end{align*}
		\end{minipage}
		\]
	
		\item[d)] If  $d < d_0$, then the pair of strategies $(v_{min},v_{min})$ is the unique Nash equilibrium.
	\end{enumerate}
\end{thm}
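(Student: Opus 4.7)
The plan is to treat the four cases separately, resolving (a) and (d) by pure best-response analysis and verifying the mixed strategies in (b) and (c) by the standard indifference-cum-deviation check. Throughout, I would exploit the shift-invariance \eqref{loopinv} to normalize $x_0 = 0$, $y_0 = d_0$. In cases (b)--(d), the hypothesis $d_0 \leq d_{\yy}(x_0,y_0)$ combined with \eqref{assump:short} ensures that player $\yy$ cannot overtake $\xx$; the only possible overtake is $\xx$ passing $\yy$, and it occurs precisely when $v_{\xx} > v_{\yy} + d_0/T$. Consequently $u_{\xx}$ is piecewise linear in $(v_{\xx},v_{\yy})$ with a single jump of size $p\lambda D$ at this threshold, and $u_{\yy}$ is determined from $u_{\xx}$ via \eqref{dist-const}.

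Cases (a) and (d) are the pure-strategy regimes. In (a), with $d_0 = 0$, any positive speed gap $\delta = v_{\xx} - v_{\yy}$ gives $u_{\xx} \approx p\lambda(D - T\delta) - cT$ and $u_{\yy} \approx p\lambda T\delta - cT$; the slower driver has a strict incentive to marginally overtake, and this iterates to push both to $v_{max}$. I would confirm $(v_{max},v_{max})$ is Nash by using \eqref{assump:short} to rule out downward deviations, and uniqueness by eliminating candidates with $v_{\xx}^* < v_{\yy}^*$, $v_{\xx}^* > v_{\yy}^*$, and $v_{\xx}^* = v_{\yy}^* < v_{max}$ through the analogous deviation arguments. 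In (d), $d < d_0$ precludes any overtake, both utilities are linear with strict slope $-p\lambda T$ in the player's own speed, and $(v_{min},v_{min})$ is the unique Nash equilibrium.

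For (b) and (c) the discontinuity in $u_{\xx}$ rules out any pure Nash equilibrium, and I would verify the proposed $(X,Y)$ by the standard two-step. First, compute $f_{\xx}(v_{\xx}) := \E[u_{\xx}(x_0,v_{\xx},y_0,Y)]$ and $f_{\yy}(v_{\yy}) := \E[u_{\yy}(x_0,X,y_0,v_{\yy})]$ as explicit piecewise-linear functions whose breakpoints are the overtake thresholds induced by the opponent's atoms. Second, check that $f_{\xx}$ is constant on $\mathrm{supp}(X)$ up to $O(\varepsilon)$ and no off-support $v_{\xx}$ exceeds this by more than $\varepsilon$; symmetrically for $f_{\yy}$. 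The key calibration is that the uniform components of $X$ and $Y$ share density $T/D$: the marginal rate $p\lambda D\cdot f_Y(v_{\xx} - d_0/T) - p\lambda T$ of $f_{\xx}$ vanishes precisely when $f_Y = T/D$, enforcing indifference on the uniform part of $X$'s support. The atom of $Y$ at $v_{max} - d_0/T + \varepsilon/T$ is the minimum escape speed that denies $\xx$'s maximal overtake attempt by an $\varepsilon$-margin, while the atoms at $v_{min}$ encode the ``crawl'' option. In (c), where $d_0 = d$, the uniform components collapse and the strategies become purely atomic, with the $\varepsilon$-margin encoded in the support point $v_{min} + \varepsilon/T$ and in the vanishing probability $2\varepsilon/D$ placed on $v_{max}$ by $X$.

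The main obstacle will be the bookkeeping of one-sided limits at the overtake thresholds --- where an atom of the opponent either does or does not trigger overtake depending on $\varepsilon$-level offsets --- and the verification that every off-support pure deviation yields no more than $\varepsilon$ gain; in particular, I must exclude intermediate speeds $v_{\xx} \in (v_{min}, v_{min} + d_0/T)$ (where $\xx$ neither overtakes nor benefits from extreme slowness) and intermediate $v_{\yy}$ between the uniform support of $Y$ and its escape atom. The flexibility in the parameters $(q_1,q_2)$ with $q_1 + q_2 = d/D$ and $q_2 \leq (d - d_0)/D$ is natural: only the cumulative overtake probability enters $f_{\xx}$, so how $\yy$ partitions her slow mass between the atom at $v_{min}$ and the uniform component on $(v_{max} - d_0/T - q_2 D/T,\ v_{max} - d_0/T)$ does not affect $\xx$'s best response.
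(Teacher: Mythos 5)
Your overall architecture mirrors the paper's: normalize via \eqref{loopinv}, settle cases a) and d) by pure best-reply and monotonicity arguments (these match the paper's essentially verbatim), and handle b), c) through the indifference principle. The structural difference is that the paper \emph{derives} the equilibria --- best replies pin down the supports, then the indifference conditions \eqref{lemma:mixed2}--\eqref{lemma:mixed3} of Lemma \ref{lemma:mixed} force the continuous components to be uniform with density $T/D$ and determine the atom weights --- whereas you only \emph{verify} the stated profiles. Verification can prove that the stated profiles are $\varepsilon$-equilibria, but not the characterization implicit in ``the $\varepsilon$-equilibria $(X,Y)$ is,'' so you would still owe the derivation half. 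That said, your key calibration (continuous density $T/D$ annihilates the marginal rate $p\lambda D f_Y - p\lambda T$) and your reading of $Y$'s atom at $v_{max}-d_0/T+\varepsilon/T$ as an $\varepsilon$-margin escape are exactly right.

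The genuine gap is your final claim that the family $(q_1,q_2)$ with $q_1+q_2=d/D$, $q_2\le (d-d_0)/D$ is free because ``only the cumulative overtake probability enters $f_{\xx}$.'' The cumulative probability enters \emph{pointwise}: $f_{\xx}(v)=p\lambda\,(d_0+T\E Y-Tv+D\,\P(Y<v-d_0/T))-cT$, so \emph{where} $\yy$'s slow mass sits matters. If $q_2<(d-d_0)/D$, the support of $Y$ has a hole $(v_{min},\,v_{max}-d_0/T-q_2D/T)$; for $v_{\xx}$ slightly above $v_{min}+d_0/T$, player $\xx$ overtakes the $v_{min}$-atom (mass $q_1$) while staying behind $V$ and the escape atom, so $f_{\xx}(v_{\xx})-f_{\xx}(v_{min})\approx q_1D-d_0=(d-d_0)-q_2D$. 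This is of order $d-d_0$, not $O(\varepsilon)$: for instance, with $q_2=0$ the gain over $\E[u_{\xx}(X,Y)]$ is at least $(d-d_0)/2$, so the profile is \emph{not} an $\varepsilon$-equilibrium for small $\varepsilon$, and the indifference condition \eqref{lemma:mixed3} fails on the sub-interval $(v_{min}+d_0/T,\,v_{max}-q_2D/T)$ of $\mathrm{supp}(U)$. Note also that you flag the wrong danger zone: deviations $v_{\xx}\in(v_{min},v_{min}+d_0/T)$ are harmless ($f_{\xx}$ is decreasing there); the problem sits just \emph{above} the overtake threshold. Carrying out your own verification honestly forces $q_1=d_0/D$ and $q_2=(d-d_0)/D$ up to $O(\varepsilon/D)$ slack. (This is in fact a soft spot of the paper's own proof as well: it asserts ``many solutions'' for $(q_1,q_2)$ after checking only \eqref{lemma:mixed2}, never \eqref{lemma:mixed3} across all of $\mathrm{supp}(U)$; your blind attempt inherits the flaw but, unlike the paper, proposes a computation that would expose it.)
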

\begin{proof}
	The proof is in Appendix \ref{app}.
\end{proof}

By assumption \eqref{assump:short}, this result covers all the possible initial positions $(x_0,y_0)$, so we have a complete and explicit characterization of the equilibria. Simply put, the theorem asserts that if the players have the same starting point, they drive at the maximum speed. If their positions differ by at most the escape distance, then they play mixed strategies. Lastly, if the distance between them is greater than the escape one, they drive at the minimum speed. See Figure \ref{Fig:Theo2} for an illustration of the result and its cases.\\

\begin{figure}[h!]
	\includegraphics[width=17cm,height=11cm]{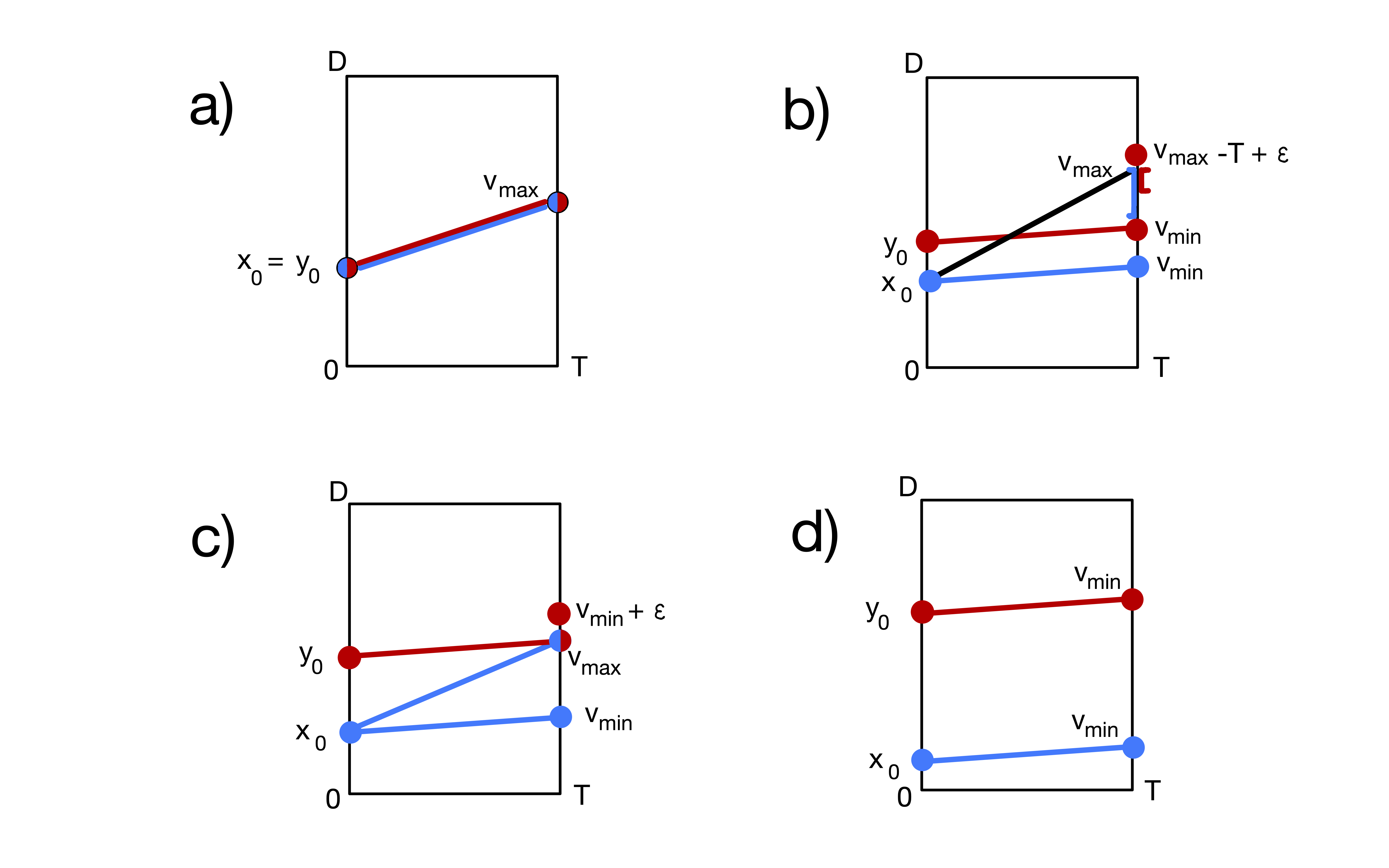}
	\caption[Time-space depictions of Theorem 2. On the rightmost side of each graph are the final positions of players, blue for $\xx$ and red for $\yy$, after driving at the optimal speed for $T$ units of time. Points represent probability mass atoms, while continuous bars give the intervals in which the locations may be.]{On the rightmost side of each graph are the final positions of players, blue for $\xx$ and red for $\yy$, after driving at the optimal speed for $T$ units of time. Points represent probability mass atoms, while continuous bars give the intervals in which the locations may be.}\label{Fig:Theo2}
\end{figure}

\begin{thm}\label{theo:coop} \textbf{Cooperative game.}
	Without loss of generality we assume that $d_0=d_{\xx}(x_0,y_0)$.
	
	\begin{enumerate}
		\item[a)]  If $d_0=0$, then the optimal pairs of driving speeds are $(v_{min}, v_{max})$ and $(v_{max}, v_{min})$.
		\item[b)]  If $0 < d_0$ and $d_0 + d < \frac{D}{2}$, then the only optimal strategies are $(v_{min}, v_{max})$.
		\item[c)]  If $d_0 + d > \frac{D}{2}$, then any pair $(v_{\xx}, v_{\yy})$ such that $T(v_{\yy}- v_{\xx} )= \frac{D}{2}$ is an optimal strategy.
	\end{enumerate}	
\end{thm}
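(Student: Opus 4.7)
The plan hinges on a simple but powerful reduction. By \eqref{dist-const} together with \eqref{2-noncoop-utilx}--\eqref{2-noncoop-utily}, the sum $u_{\xx}+u_{\yy}$ is identically equal to $p\lambda D - 2cT$, independent of the initial positions and of the chosen speeds. Hence the cooperative objective \eqref{coop-util} simplifies to $p\lambda D - 2cT - k\,|u_{\xx}-u_{\yy}|$, and maximizing it is equivalent to minimizing the equity penalty $|u_{\xx}-u_{\yy}|$. Using $d_{\xx}+d_{\yy}=D$ once more, this penalty equals $p\lambda\,|2d_{\xx}(x_T,y_T)-D|$ whenever $x_T\neq y_T$ and vanishes precisely when the two buses land at antipodal points of the torus.

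The second step is to describe the reachable values. By \eqref{loopinv} and the definition of $d_{\xx}$,
\[
d_{\xx}(x_T,y_T)=\bigl(d_0+T(v_{\yy}-v_{\xx})\bigr)\bmod D,
\]
while $T(v_{\yy}-v_{\xx})$ varies freely over $[-d,d]$. Under \eqref{assump:short} we have $d<D/2$, so the reachable values form a proper sub-arc of $[0,D)$ of length $2d$ centred at $d_0$. The minimization of $|2z-D|$ on this arc is elementary and its outcome depends only on whether the arc misses, touches, or contains the antipode $D/2$, giving the three cases of the theorem: in (a), $d_0=0$ and the arc is symmetric around $0$, so its endpoints $d$ and $D-d$ are the closest points to $D/2$, attained by $(v_{min},v_{max})$ and $(v_{max},v_{min})$ with common penalty $p\lambda(D-2d)$; in (b) the arc lies strictly to the left of $D/2$ and $|2z-D|$ is strictly decreasing on it, so the unique minimizer is its upper endpoint $d_0+d$, reached only by $(v_{min},v_{max})$; in (c) the arc contains $D/2$, so the minimum value $0$ is attained on the one-parameter family characterized by $d_{\xx}(x_T,y_T)=D/2$, equivalently $T(v_{\yy}-v_{\xx})=D/2-d_0$.

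The only delicate point will be the tie-breaking at $x_T=y_T$, where \eqref{2-noncoop-utilx}--\eqref{2-noncoop-utily} also produce $|u_{\xx}-u_{\yy}|=0$. In case (a) this occurs for every common-speed pair $(v,v)$, so the corner pairs listed in the theorem should be understood as the optimizers \emph{among strategies that genuinely separate the two buses at time $T$}; once this convention is made explicit, the remainder of the argument is a routine geometric verification on the torus, and similarly for case (c) one should check that the boundary values $v_{\xx},v_{\yy}\in[v_{min},v_{max}]$ admit at least one solution of $T(v_{\yy}-v_{\xx})=D/2-d_0$, which is precisely what the hypothesis $d_0+d>D/2$ guarantees.
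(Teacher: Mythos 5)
Your proposal is correct in substance and follows essentially the same route as the paper's own (very terse) proof: by \eqref{dist-const} the sum $u_{\xx}+u_{\yy}$ is the constant $p\lambda D-2cT$, so maximizing \eqref{coop-util} reduces to minimizing the equity penalty $k\,|u_{\xx}-u_{\yy}|$, which for separated buses equals $k\,p\lambda\,|2d_{\xx}(x_T,y_T)-D|$ and is smallest when the final directed distance is as close to $D/2$ as the reachable set $\{(d_0+T(v_{\yy}-v_{\xx}))_{mod\,D}\}$ allows. Where you go beyond the paper is precisely in the two places where its statement and proof are loose, and both catches are worth keeping. First, the coincidence tie-break: since \eqref{2-noncoop-utilx}--\eqref{2-noncoop-utily} assign equal payoffs when $x_T=y_T$, the literal penalty is $0$ there, so in case a) every common-speed pair $(v,v)$ literally dominates the stated optima; the paper's proof silently ignores this, and your convention (optimize among strategies that genuinely separate the buses) is the needed patch. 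Note, however, that the same issue arises in case b) whenever $d_0\leq d$: any pair with $T(v_{\xx}-v_{\yy})=d_0$ makes the buses coincide and also achieves penalty $0$, so the convention must be invoked there as well, not only in case a). Second, your condition $T(v_{\yy}-v_{\xx})=\frac{D}{2}-d_0$ in case c) is the correct one: the theorem's displayed condition $T(v_{\yy}-v_{\xx})=\frac{D}{2}$ is infeasible under \eqref{assump:short}, since $|T(v_{\yy}-v_{\xx})|\leq d<\frac{D}{2}$, and your version is the one consistent with the paper's remark that $d_0=\frac{D}{2}$ yields optima of the form $(v,v)$.

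One small inaccuracy to repair in your case b): when $0<d_0<d$ the reachable arc wraps past $0$, so it does not lie strictly to the left of $D/2$ and the penalty $|2z-D|$ is not monotone along it. The correct argument compares the two endpoints of the arc, with penalties $p\lambda\bigl(D-2(d_0+d)\bigr)$ and $p\lambda\bigl(D-2(d-d_0)\bigr)$, and uses $d_0>0$ to conclude that the upper endpoint wins, so $(v_{min},v_{max})$ remains the unique separating minimizer. With that adjustment, and the tie-break convention extended to case b), your argument is complete and in fact more rigorous than the paper's.
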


\begin{proof}
	The proof is direct. Since the sum $u_{\xx}(x_0,v_{\xx},y_0,v_{\yy}) + u_{\yy}(x_0,v_{\xx},y_0,v_{\yy})$ is equal to a constant for any pair $(v_{\xx},v_{\yy})$, the only quantity left to optimize is $- k | u_{\xx}(x_0,v_{\xx},y_0,v_{\yy}) - u_{\yy}(x_0,v_{\xx},y_0,v_{\yy}) | $. Minimization occurs when the distance between the final positions $x_{F}$ and $y_{F}$ is the greatest possible. It is easy to check that the driving speeds listed above do just this. 
\end{proof}

An important observation is that in the case where $d_0=\frac{D}{2}$, which is accounted for in c) all the optimal strategies are of the form $(v,v)$ for a feasible speed $v$. Intuitively, this means that if the players have diametrically opposite initial positions, then any speed is optimal, as long as both adopt it.

\subsection{Long-run analysis}\label{subsect:time}

Let us recall that the previous results are obtained for small enough $T$, the formal requirement being stated in \eqref{assump:short}. It is of interest to know what happens in longer time periods, and in particular, in the long-run. To this end, we repeat the games infinitely many times, implementing the optimal strategies in each stage. Of course, the strategies depend on the distance between players, which is given by the implementation of the optimal strategies in the previous period. It is thus convenient to define a recursive process, and to introduce a few variables.

Consider the initial positions of $\xx$ and $\yy$, namely $(x_0,y_0)$, with $d_0$ defined in \eqref{initial distance}. Let $\{(x_n,y_n) \}_{n \geq 1}$ be a stochastic process with the following property: the pair $(x_{k+1}, y_{k+1})$ gives the final locations of the players after they play their optimal strategies, taking $(x_k,y_k)$ as their starting positions. It is worth noting that since equilibria in Theorem \ref{Fig:Theo2} involve mixed strategies, randomness is very much present in the process.

We define the distance between the buses at any (non-negative integer) time as:
\begin{equation}\label{dist_evo}
	d_n:= \min \{ d_{\xx}(x_n,y_n), d_{\yy}(x_n,y_n) \} \qquad \forall \enspace n \geq 0.
\end{equation}
We also define the first time in which $d_n$ exceeds the escape distance $d$ (given in \eqref{escapedist}), denoted by $N$, as follows
$$ N = \min \{ n \geq 0: d_n > d \} .$$

\begin{thm}\label{theo:repeat}\textbf{Non-cooperative game.}
	If $d_0 \neq 0, d$, we have
	$$ \mathbb{P}  ( N > k)  \leq \Big(\frac{d}{D} \Big) ^{k}   \quad \text{for all $k \geq 1$}.$$
	If $d_0 = d$, then there exists a geometrically distributed random time $M$ with parameter $1- (1- \frac{2\varepsilon}{D}  )(\frac{2d}{D})$, taking values in the natural numbers, with $\varepsilon$ satisfying the $\varepsilon$-equilibrium conditions in Theorem \ref{Fig:Theo2}, with the property that $d_k=d$ for all $k<M$, and
	
	\begin{displaymath}
		d_M= \left\{ \begin{array}{ll}
			0 & \textrm{ with probability } \qquad \frac{ 4 \varepsilon d }{ D^2 \Big( 1- \Big(1- \frac{2 \varepsilon}{D} \Big) \Big( \frac{2d}{D} \Big) \Big)  },\\
			>d &  \textrm{ with complementary probability.}  \\
		\end{array} \right. 
	\end{displaymath}
\end{thm}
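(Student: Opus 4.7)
The plan is to view the sequence $(x_n,y_n)$ as a Markov chain, which, by the translation invariance \eqref{loopinv} and the symmetry $d_{\xx}(x,y)=d_{\yy}(y,x)$, projects to a Markov chain on the scalar gap $d_n$. Theorem \ref{theo:2noncoop} tells us which branch of the equilibrium governs each one-step transition, and since case (d) makes $\{d_n>d\}$ absorbing (both players drive at $v_{min}$, preserving the gap forever), the time $N$ is simply the first hitting time of $(d,\infty)$. So I would reduce the proof to computing one-step transition kernels under regimes (b) and (c), then iterating via the Markov property.

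For the case $d_0=d$ I would enumerate the four atom-atom outcomes of the two-point equilibrium in Theorem \ref{theo:2noncoop}(c). Writing $d_{\xx}(x_1,y_1)=(d+T(Y-X))\bmod D$ and using the independence of $X$ and $Y$, the four transitions are
\[
(v_{min},v_{min})\to d_1=d,\quad (v_{min},v_{min}+\tfrac{\varepsilon}{T})\to d_1=d+\varepsilon,\quad (v_{max},v_{min})\to d_1=0,\quad (v_{max},v_{min}+\tfrac{\varepsilon}{T})\to d_1=\varepsilon,
\]
with joint probabilities $(1-\tfrac{2\varepsilon}{D})(\tfrac{2d}{D})$, $(1-\tfrac{2\varepsilon}{D})(1-\tfrac{2d}{D})$, $\tfrac{4\varepsilon d}{D^2}$ and $(\tfrac{2\varepsilon}{D})(1-\tfrac{2d}{D})$, respectively. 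Since the stay-at-$d$ probability is the constant $(1-\tfrac{2\varepsilon}{D})(\tfrac{2d}{D})$ at every visit to $d$, $M$ is geometric with parameter $1-(1-\tfrac{2\varepsilon}{D})(\tfrac{2d}{D})$. Applying Bayes' rule at the leaving step gives $\P(d_M=0)=(4\varepsilon d/D^2)/[1-(1-2\varepsilon/D)(2d/D)]$, matching the statement; the remaining mass---including the order-$\varepsilon$ atom at $d_M=\varepsilon$---is absorbed into $\{d_M>d\}$ for sufficiently small $\varepsilon$.

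For the case $d_0\in(0,d)$ I would use the mixed equilibrium of Theorem \ref{theo:2noncoop}(b). The identity $d_{\xx}(x_1,y_1)=(d_0+T(Y-X))\bmod D$, combined with $d_0<d<D/2$ and $T(Y-X)\in[-d,d]$, reduces the escape condition $d_1>d$ to $T(Y-X)>d-d_0$. A product enumeration of the atom/continuous components of $(X,Y)$ shows that this inequality is attained only on the joint atom $\{X=v_{min},\,Y=v_{max}-\tfrac{d_0}{T}+\tfrac{\varepsilon}{T}\}$, carrying probability $(1-\tfrac{d-d_0}{D})(1-\tfrac{d}{D})$. Iteration via the Markov property then produces the claimed geometric decay.

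The hard part will be closing this iteration for $d_0\in(0,d)$: the one-step non-escape probability $\tfrac{d}{D}+(1-\tfrac{d}{D})\tfrac{d-d_0}{D}$ depends on $d_0$ and for small $d_0$ exceeds $d/D$, so a naive induction does not yield $(d/D)^k$. To close the argument one must either track the conditional law of $d_n$ given $\{N>n\}$ and show it concentrates near $d$, where the per-step non-escape rate drops to $d/D$, or couple $(d_n)$ to an auxiliary chain whose per-step escape probability is exactly $1-d/D$. The bookkeeping over the mixed-strategy atoms and the uniform components of the joint law of $(X,Y)$, while mechanical, is lengthy.
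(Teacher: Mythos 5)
Your treatment of the case $d_0=d$ is correct and follows essentially the same route as the paper: the same four atom--atom transitions, the same geometric law for $M$ (the stay probability $(1-\frac{2\varepsilon}{D})(\frac{2d}{D})$ is the same at every visit to $d$), and Bayes' rule at the exit step. You are in fact more careful than the paper here: the paper asserts that the outcome $(v_{max},v_{min}+\frac{\varepsilon}{T})$ leads to a gap greater than $d$, whereas, as you computed, it leads to $d_M=\varepsilon$. However, your way of discharging that outcome is not valid: since $\varepsilon<d$, the event $\{d_M=\varepsilon\}$ is never contained in $\{d_M>d\}$, and its probability $(\frac{2\varepsilon}{D})(1-\frac{2d}{D})$ is of the same order in $\varepsilon$ as the mass $\frac{4\varepsilon d}{D^2}$ at $0$, so it cannot be ``absorbed'' for small $\varepsilon$. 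This is really a defect of the theorem statement (which the paper's proof hides); the honest conclusion of your enumeration is that $d_M$ takes the three values $0$, $\varepsilon$, $d+\varepsilon$ with the conditional probabilities you wrote down.

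The genuine gap is the first claim, $\mathbb{P}(N>k)\leq(\frac{d}{D})^k$, which you leave open --- and the obstruction you hit is real, not ``lengthy bookkeeping''. The paper closes the induction by asserting that the escape event $\{X=v_{min},\,Y=v_{max}-\frac{d_k}{T}+\frac{\varepsilon}{T}\}$ has probability $(1-\frac{d-d_k}{D})(1-\frac{d_k}{D})$, which equals $1-\frac{d}{D}+\frac{d_k(d-d_k)}{D^2}\geq 1-\frac{d}{D}$ for every $d_k\in(0,d)$, whence the geometric domination. That computation assigns mass $1-\frac{d_k}{D}$ to the top atom of $Y$, whereas Theorem \ref{theo:2noncoop}(b) --- and the indifference computation in the appendix, which forces $q_1+q_2=\frac{d}{D}$ --- assigns it mass $1-\frac{d}{D}$. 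You took the theorem at face value, so your one-step escape probability is $(1-\frac{d-d_k}{D})(1-\frac{d}{D})\leq 1-\frac{d}{D}$; the per-step non-escape probability is then at least $\frac{d}{D}$ at every step, and conditioning on the history gives $\mathbb{P}(N>k)\geq(\frac{d}{D})^k$, with strict inequality in general. So under the equilibrium as stated the inequality of the theorem is reversed, and neither of your proposed repairs (tracking the conditional law of $d_n$ near $d$, or coupling with an auxiliary chain) can succeed: no argument can. To complete a proof you must either justify the atom mass $1-\frac{d_k}{D}$ used in the paper's proof (which contradicts Theorem \ref{theo:2noncoop}(b) as written), or settle for the weaker but still geometric bound $\mathbb{P}(N>k)\leq\big(\frac{d}{D}(2-\frac{d}{D})\big)^k$, which follows from your (correct) one-step probability via $(1-\frac{d-d_k}{D})(1-\frac{d}{D})\geq(1-\frac{d}{D})^2$. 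Either way, you should state explicitly that this is an inconsistency between Theorem \ref{theo:2noncoop}(b) and the proof of Theorem \ref{theo:repeat}, rather than a gap you expect to fill by finer bookkeeping.
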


\begin{proof}
	For the proof we refer the reader to Appendix \ref{app}.
\end{proof}

Explicitly, this means that for most starting points, playing the game repeatedly leads to a bus gap greater than the escape distance in a finite and geometrically distributed time. From Theorem \ref{Fig:Theo2}, we conclude that in this case, drivers end up driving at the minimum speed. There are two exceptions to this: if the drivers have the same starting position, or if the initial distance between them is exactly that of escape. In the former case, the drivers choose to go at the maximum speed forever, and in the latter, they maintain their distance for some random time, and from then on reach the escape distance, and drive at the minimum speed. It is with very little probability (proportional to $\varepsilon$) that this scenario does not occur. Figure
\ref{Fig:Theo3} shows the evolution of the distance process $\{d_n: n \geq 0 \}$ given a few initial distances $d_0$. 

\begin{figure}[h!]
	\includegraphics[width=10cm,height=7.7cm]{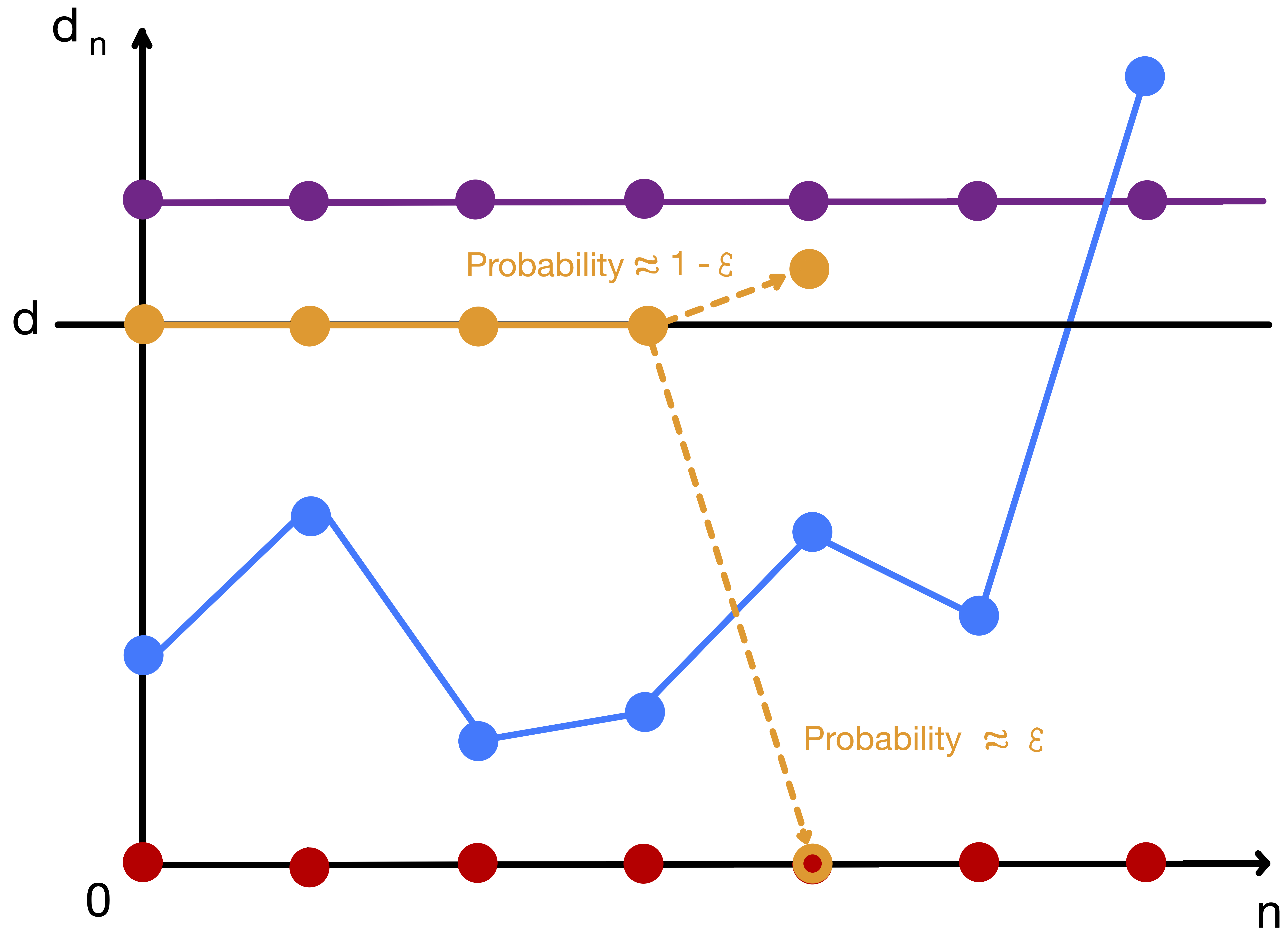}
	\caption[Evolution of the process $\{d_n: n \geq 0 \}$ for different initial positions, non-cooperative game]{Evolution of the process $\{d_n: n \geq 0 \}$ for different initial positions, non-cooperative game.}\label{Fig:Theo3}
\end{figure}

\begin{thm}\label{theo:repeatcoop}\textbf{Cooperative game.}
	For all $d_0 \geq 0$ we have $N \leq \lceil \frac{  D }{2 d} \rceil$, where $N= \min \Big\{ n \geq 0 :  d_n = \frac{D}{2} \Big\}$, and $\lceil z \rceil$ is the least integer greater than or equal to the real number $z$.
\end{thm}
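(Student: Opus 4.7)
The plan is to track the monotone evolution of the process $\{d_n\}_{n\geq 0}$ under the optimal cooperative strategies characterized in Theorem \ref{theo:coop}, and then perform an elementary count. First I would note that, by \eqref{dist-const}, $d_n \in [0, D/2]$ for every $n$, and that at each stage we may assume without loss of generality $d_n = d_{\xx}(x_n, y_n)$ (relabeling the players if needed, since the definition of $d_n$ is symmetric).

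Next I would analyze one generic iteration. When $d_n + d < D/2$ (covering case (a), $d_n = 0$, as well as the core of case (b)), an optimal play is $(v_{\xx}, v_{\yy}) = (v_{min}, v_{max})$. Under this pair, the directed distance from $\xx$ to $\yy$ grows by exactly $T(v_{max} - v_{min}) = d$; since $d_n + d < D/2 < D$ there is no modular wraparound, so $d_{n+1} = d_n + d$. When instead $d_n + d \geq D/2$, case (c) applies: the shift $D/2 - d_n$ lies in $[0, d]$, so a feasible pair $(v_{\xx}, v_{\yy}) \in \Gamma^2$ exists with $T(v_{\yy} - v_{\xx}) = D/2 - d_n$, producing $d_{\xx}(x_{n+1}, y_{n+1}) = D/2$ and hence $d_{n+1} = D/2$.

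Combining these two regimes, $d_n$ increases by $d$ at every step until a single final step sends it exactly to $D/2$. Let $k^{*}$ be the smallest non-negative integer with $d_0 + k^{*} d \geq D/2 - d$; then $k^{*} \leq \lceil (D/2 - d_0)/d \rceil - 1$, and adding one for the final jump gives
\begin{equation*}
N \;\leq\; k^{*} + 1 \;\leq\; \Bigl\lceil \tfrac{D/2 - d_0}{d} \Bigr\rceil \;\leq\; \Bigl\lceil \tfrac{D}{2d} \Bigr\rceil,
\end{equation*}
where the last inequality uses $d_0 \geq 0$.

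The only point that requires care, and the main (minor) obstacle, is the interpretation of case (c) of Theorem \ref{theo:coop}: the displayed relation $T(v_{\yy} - v_{\xx}) = D/2$ must be read as prescribing the shift that sends the directed distance to $D/2$ modulo $D$, and one has to verify that the required shift $D/2 - d_n$ is attainable within $[-d, d]$, which is precisely the content of the hypothesis $d_n + d \geq D/2$ combined with $d_n \leq D/2$. Once this is pinned down, the monotone increase by steps of size $d$ gives the stated deterministic bound.
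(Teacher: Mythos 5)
Your proof is correct and takes essentially the same route as the paper's: each round of optimal cooperative play increases the gap by the escape distance $d$ until one final adjustment step lands exactly on $\frac{D}{2}$, and counting these steps of size $d$ yields the bound $\lceil \frac{D}{2d} \rceil$. The only difference is that you make explicit two points the paper's one-paragraph proof glosses over, namely the step-count bookkeeping and the reading of case c) of Theorem \ref{theo:coop} as prescribing the shift $T(v_{\yy}-v_{\xx}) = \frac{D}{2} - d_n$ (the literal relation $T(v_{\yy}-v_{\xx}) = \frac{D}{2}$ is infeasible under \eqref{assump:short}), both of which are sound.
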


\begin{proof}
	First note that $N$ gives the time in which the buses reach diametrically opposite positions in the circuit. Also, playing the optimal strategies in Theorem \ref{theo:coop}, increases the distance between the buses by $d$. Hence, repeating the game eventually leads to reaching the diametric distance. This means that $N$ is at most the number of steps of size $d$ necessary to go over $\frac{D}{2}$. Once diametrical positions are reached, the distance is preserved forever.
\end{proof}

\subsection{Extension}\label{sub:ext}
It is possible to account for perturbations like traffic lights, congestion, or accidents, by introducing a random noise to the displacement of buses. One could do this defining
\begin{equation}\label{noisemodel}
	x_T= (x_0 +  T v_{\xx} + \sigma Z_x )_{mod \, D}  \quad\text{and}\quad y_T =  (y_0 +  T v_{\yy} + \sigma Z_y )_{mod \, D}, 
\end{equation}
where $Z_x$ and $Z_y$ are independent standard normal random variables and $\sigma \geq 0$ is a fixed parameter. 

Then, the following results would be observed.

\begin{itemize}
	\item \textbf{Non-cooperative game.} Given that the expected value of the final positions is unchanged, Theorem \ref{theo:2noncoop} remains valid. However, the repetition of this new game leads to a new result. Since the probability of maintaining a null, or escape distance $d$, at any positive time is zero, the long-run analysis is reduced to two distinct cases: $0 < d_0  < d$ and $d_0 < d$. Arguments similar to that in the proof of Theorem \ref{theo:repeat} show that if $0< d_0 < d$, we have $d_N \geq d$ in an exponentially fast time $N$. If $d < d_0$, then the distance process $\{d_n\}_{n \geq 1}$ remains above $d$ for a random time $M$, but eventually falls below it. The expected time above is inversely proportional to $\sigma$.
	
	\item \textbf{Cooperative game.}  The analysis collapses to the cases b) and c) of Theorem \ref{theo:coop}. So, while the players try to reach the diametrically opposite positions, with probability one this does not occur.
\end{itemize}

\section{Concluding remarks}\label{Sec:Conclu}

Our theoretical results are consistent with the driving practices mentioned in the Introduction. In particular, Theorem \ref{theo:2noncoop}.$a$ induces (2) Racing, Theorem \ref{theo:2noncoop}.$b$, $c$ conduce to (2) Racing and (3) Overtaking, Tailing or Chasing, and Theorem \ref{theo:2noncoop}.$d$ to (1) Hanging back or Crowling. It is worth noting that all of the aforementioned are short-term strategies. As far as the time-evolution of the game goes, Theorem \ref{theo:repeat} asserts that in the long run and with high probability, both operators end up hanging back. Theorems \ref{theo:coop} and \ref{theo:repeatcoop} are intended to contrast the drivers' optimal strategies and ultimately the equilibria when cooperation is desired.

In subsection \ref{sub:ext}, we extended the model to allow for randomness in displacement. In this scenario no equilibrium is lasting, so the operators alternate between racing, hanging back and chasing from time to time. We believe this is precisely what happens in Mexico City, although proving this would require a data driven approach analysis. 

There are a few open problems worth exploring. First, one could increase the number of players, and investigate whether equilibria still exists, and if so, try to characterize it. Second, one may vary the distribution of the passengers along the route, dispensing with the homogeneous assumption. Along these lines, one may introduce traffic congestion by making the utility function depend on space in a non-homogeneous manner. This would potentially require strategies to depend on the player's position. Lastly, one could introduce decision variables like tariffs and timetables; doing so would allow to compare the results with some that have already been addressed in the literature.

\subsection*{Declaration of interest}
None.

\bibliographystyle{abbrv}
\bibliography{Buses.bib}

\appendix{}
\section{Computations}\label{app}

To prove Theorem \ref{theo:2noncoop}, it is convenient to introduce the following Lemma.

\begin{lem}\label{lemma:mixed}
	Let $X$ be a mixed strategy of $\xx$ and $Y$ be a mixed strategy of $\yy$. We define $Z$ to be a mixed random variable in the Probability theory sense: it has both discrete and continuous components. In particular, $Z$ is of the form
	\begin{eqnarray*}
		Z= \left\{ \begin{array}{cc}
			z_i &  \textrm{ with probability } p_i, \textrm{ for } i \in I ,\\
			W & \textrm{ with probability }  1-  \sum_{i \in I} p_i,
		\end{array} \right. 
	\end{eqnarray*}
	where $I$ is a finite or numerable set, and $W$ is a continuous random variable with density $f_W(t)$ on its support, denoted by $supp(f_W)$. Then,
	\begin{eqnarray}\label{lemma:mixed1}
		U_\xx ( x_0, X, y_0, Y) &=& \sum_{i \in I} \mathbb{E} (  u_\xx ( x_0, X, y_0, Y) | Z=z_i) \, p_i  \nonumber \\
		&+&  \Big( 1-  \sum_{i \in I} p_i \Big) \int_{supp(f_W)}  \mathbb{E} (  u_\xx ( x_0, X, y_0, Y) | Z=w) f_W(w) dw .
	\end{eqnarray}
	If $Z=X$ and $(X,Y)$ is a mixed strategy Nash equilibrium, then
	\begin{equation}\label{lemma:mixed2}
		\mathbb{E} (  u_\xx ( x_0, X, y_0, Y) | X=z_i) =  \int_{supp(f_W)}  \mathbb{E} (  u_\xx ( x_0, X, y_0, Y) | X=w) f_W(w) dw  \qquad \forall i \in I,
	\end{equation}
	and
	\begin{equation}\label{lemma:mixed3}
		\mathbb{E} (  u_\xx ( x_0, X, y_0, Y) | X=w_1)=  \mathbb{E} (  u_\xx ( x_0, X, y_0, Y) | X=w_2) \qquad \forall w_1,w_2 \in supp(f_W).
	\end{equation}
\end{lem}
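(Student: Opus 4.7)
The identity \eqref{lemma:mixed1} is a direct application of the law of total expectation, conditioning on the random variable $Z$. Since $Z$ is a mixed random variable with atoms $\{z_i\}_{i \in I}$ carrying mass $p_i$ and an absolutely continuous part $W$ with density $f_W$ carrying the remaining mass $1 - \sum_{i \in I} p_i$, I would split $\mathbb{E}[u_\xx(x_0,X,y_0,Y)]$ into a discrete contribution $\sum_{i \in I} \mathbb{E}[u_\xx(x_0,X,y_0,Y) \mid Z = z_i]\, p_i$ and a continuous contribution given by the integral against $f_W$. Integrability is not an issue because the utility is bounded on the compact strategy space $\Gamma$.

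For \eqref{lemma:mixed2} and \eqref{lemma:mixed3}, take $Z = X$ and set $g(v) := \mathbb{E}[u_\xx(x_0,X,y_0,Y) \mid X = v]$. Since $X$ and $Y$ are drawn independently, $g(v)$ coincides with the expected payoff to $\xx$ of playing the pure strategy $v$ against the mixed strategy $Y$, i.e.\ $g(v) = \mathbb{E}[u_\xx(x_0, v, y_0, Y)]$. The Nash equilibrium condition states that $X$ is a best response to $Y$, so $X$ maximizes $\mathbb{E}[g(X)]$ over all admissible mixed strategies supported in $\Gamma$. The \emph{indifference principle} then gives that $g$ must be constant on $\mathrm{supp}(X)$: I would argue by contradiction, supposing there exist $a, b \in \mathrm{supp}(X)$ with $g(a) < g(b)$, and constructing an alternative mixed strategy by shifting a positive amount of mass from a neighborhood of $a$ to a neighborhood of $b$. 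By definition of support this perturbation is admissible, and it strictly increases $\mathbb{E}[g(X)]$, contradicting optimality. Since the common value of $g$ on $\mathrm{supp}(X)$ is then attained both at the atoms $z_i$ and at $f_W$-almost every $w$ in the support of $W$, the equalities \eqref{lemma:mixed2} and \eqref{lemma:mixed3} follow (the latter being rewritten as the integral of $g(w)$ against $f_W$, which equals its common value).

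The main subtlety will be the deviation argument in the continuous part. Because $u_\xx$ is discontinuous in the speeds (as already noted in the paper), $g$ need not be continuous, and conditional expectations are only defined up to $f_W$-null sets; the cleanest formulation of \eqref{lemma:mixed3} is therefore as an almost-everywhere statement with respect to $f_W$. For the perturbation argument to bite, one needs two sets of positive $f_W$-measure on which $g$ differs strictly, so the contradiction is really driven by the existence of such sets rather than by individual points. In the settings of Theorem \ref{theo:2noncoop}, $g$ turns out to be piecewise affine with finitely many discontinuities, so choosing the canonical right- or left-continuous version of $g$ upgrades the a.e.\ statement to a pointwise one on $\mathrm{supp}(f_W)$, which is exactly the form in which the lemma will be used in the subsequent computations.
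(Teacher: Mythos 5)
Your proposal follows essentially the same route as the paper: \eqref{lemma:mixed1} by the law of total expectation conditioning on $Z$, and \eqref{lemma:mixed2}--\eqref{lemma:mixed3} by the indifference principle, deriving a contradiction with the equilibrium property of $(X,Y)$ by reallocating probability mass toward strategies with strictly higher conditional payoff. If anything, your handling of the continuous component is more careful than the paper's: the paper speaks of concentrating mass on individual points where $g(w)=\mathbb{E}(u_\xx(x_0,X,y_0,Y)\,|\,X=w)$ is maximal (a null perturbation as stated), whereas you correctly require the deviation to move mass between sets of positive $f_W$-measure and note that \eqref{lemma:mixed3} is properly an $f_W$-almost-everywhere statement, upgraded to pointwise by the piecewise structure of $g$ in the games at hand.
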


\begin{proof}
	Equation \eqref{lemma:mixed1} is straightforwardly obtained by computing the conditional expectancy of the random variable $u_\xx ( x_0, X, y_0, Y)$
	given the values of $Z$.\\
	
	Note that if \eqref{lemma:mixed2} does not occur, then there exist two different values $z_i$ and $z_j$, such that $\mathbb{E} ( u_\xx ( x_0, X, y_0, Y) | X=z_i)\neq\mathbb{E} ( u_\xx ( x_0, X, y_0, Y) | X=z_j)$. This means that $U_{\xx}$ can be increased by placing all the probability on the value that gives the highest expectation. This leads to a contradiction with the form of the mixed strategy $X$. Similar arguments apply to the case where \eqref{lemma:mixed2} is violated through the continuous component. 
	
	Likewise, if condition \eqref{lemma:mixed3} is not fulfilled, then there are two values $w_1$ and $w_2$ such that  \linebreak
	$\mathbb{E} (  u_\xx ( x_0, X, y_0, Y) | X=w_i)$ are different. Then, $U_{\xx}$ can be increased by restricting the support of $f_W$ to the points where the maximum of the function $g(w)=\mathbb{E} (  u_\xx ( x_0, X, y_0, Y) | X=w)$ is reached. Here, the form of the mixed strategy $X$ is violated.
\end{proof}

\textbf{Proof of Theorem \ref{theo:2noncoop}}: \\

First, note that for optimizing the utility function \eqref{2-noncoop-utilx}, \eqref{2-noncoop-utily} the terms $p \lambda$ and $c$ are irrelevant, since the $\argmin$ of any function is invariant under linear transformations. Thus, there is no loss of generality in assuming that $p \lambda =1$ and $c=0$.

By equation \eqref{loopinv}, we may actually assume that $0=x_0\leq y_0 < D$. We then have
$$  d_0= d_{\xx} (x_0,y_0)  = y_0 \quad\text{and}\quad  d_{\yy} (x_0,y_0) = D-y_0 .$$
Under the above assumption and using \eqref{speedbound}, \eqref{assump:short} in cases a), b), c) and d), it happens that $ 0 < x_T, y_T < D$, so we can get rid of all the $D$-modules in the computations. 

For computing the $\varepsilon$-equilibrium, we will consider the $\varepsilon$-best reply, defined as follows.
Let $\varepsilon$ be a positive number. We say that a strategy $v^*_{\xx}$ is $\xx$'s $\varepsilon$-best reply to $\yy$'s strategy $v_{\yy}$, if 
$$  u_{\xx} (x_0,v_{\xx},y_0, v_{\yy} )   \leq u_{\xx} (x_0,v^*_{\xx},y_0, v_{\yy} ) + \varepsilon ,$$
for all strategies $v_{\xx}$.

To simplify notation, we write $u_{\xx} (v_{\xx},v_{\yy} )$ and $u_{\xx} (X,Y)$ in the case of mixed strategies, instead of $u_{\xx} (x_0,v_{\xx},y_0, v_{\yy} )$ and $u_{\xx}(x_0,X,y_0,Y)$ if the computations do not depend on the fixed initial positions.\\ 

\begin{itemize}
	\item \textbf{Case a)}\ \\
	We assume that $x_0=y_0=0$. Let player $\yy$ pick the strategy $v_{\yy}= v_{max}$. Then,
	$$ u_{\xx}(x_0,v_{\xx},y_0,v_{min}) =  d_{\xx}( T v_{\xx}, T v_{max}) = T v_{max} - T v_{\xx}   \leq  T v_{max} .$$
	Using \eqref{assump:short}, we obtain the bound
	$$  u_{\xx}(x_0,v_{\xx},y_0,v_{min}) \leq  \frac{D}{2}  = u_{\xx}(x_0,v_{max},y_0, v_{max} ) .$$
	Explicitly, this means that the strategy $v_{\xx}=v_{max}$ is the best reply to $v_{\yy}=v_{max}$. By symmetry, we conclude that $(v_{max}, v_{max})$ is a Nash equilibrium.
	
	To check the uniqueness of the equilibrium, we note that $\yy$'s $\varepsilon$-best reply to a given speed $v_{\xx} < v_{max}$ chosen by $\xx$, is $v_{\yy}=  v_{\xx} + \epsilon$ for sufficiently small $\varepsilon$. On the other hand, $\xx$'s  $\varepsilon$-best reply to $v_{\yy}=  v_{\xx} + \epsilon$ is $v_{\xx}=v_{\yy} + \varepsilon$. Therefore the only equilibrium is $(v_{max}, v_{max})$.
	
	\item \textbf{Case b)}\ \\
	Let us denote by $B_{\xx}(v)$ $\xx$'s best reply when $\yy$ plays $v$. It is straightforward to show that 
	\begin{eqnarray*}
		B_{\xx}(v)= \left\{ \begin{array}{ll}
			v + \frac{d_0}{T} + \frac{\varepsilon}{T} &  \textrm{ if } v_{min} \leq v < v_{max} - \frac{d_0}{T} , \\
			v_{max} & \textrm{ if } v = v_{max} - \frac{d_0}{T} , \\
			v_{min}  & \textrm{ if }  v_{max} - \frac{d_0}{T} < v,
		\end{array} \right. 
	\end{eqnarray*}
	and \begin{eqnarray*}
		B_{\yy}(v)= \left\{ \begin{array}{ll}
			v_{min} & \textrm{ if } v  < v_{min} + \frac{d_0}{T} , \\
			v - \frac{d_0}{T} + \frac{\varepsilon}{T}  & \textrm{ if }  v_{min} + \frac{d_0}{T} \leq v \leq v_{max},
		\end{array} \right. 
	\end{eqnarray*}
	under hypothesis $b)$.
	
	If $(X,Y)$ is a mixed strategy Nash equilibrium, then the support of the random variable $X$ should be contained in the set of $\xx$'s best replies, the corresponding is true for variable $Y$. In this particular case, $X$ has support on $\{ v_{min } \} \cup ( v_{min} + \frac{d_0}{T}, v_{max} ) \cup \{ v_{max} \}$, while $Y$ has support on $\{ v_{min } \} \cup (v_{min} , v_{max} - \frac{d_0}{T} ) \cup \{ v_{max} - \frac{d_0}{T} + \frac{\varepsilon}{T} \}$.
	
	Hence, a mixed strategy $X$ with the support obtained is of the form
	\begin{eqnarray*}
		X= \left\{ \begin{array}{ll}
			v_{min} &  \textrm{ with probability } p_1 , \\
			U & \textrm{ with probability } p_2 , \\
			v_{max}  & \textrm{ with probability } 1-p_1-p_2,
		\end{array} \right. 
	\end{eqnarray*}
	where $p_1,p_2 \in [0,1]$, and $U$ is a continuous random variable with density $f_U(u)$ and support contained in $(v_{min} + \frac{d_0}{T}, v_{max})$.
	Similarly, a mixed strategy $Y$ with the desired support is 
	\begin{eqnarray*}
		Y= \left\{ \begin{array}{ll}
			v_{min} &  \textrm{ with probability } q_1 , \\
			V & \textrm{ with probability } q_2 , \\
			v_{max} - \frac{d}{T} + \frac{\varepsilon}{T}  & \textrm{ with probability } 1-q_1-q_2,
		\end{array} \right. 
	\end{eqnarray*}
	where $q_1,q_2 \in [0,1]$, and $V$ is a continuous random variable with density $f_V(v)$ with support contained in $(v_{min}, v_{max} - \frac{d_0}{T})$. 
	
	To compute the density of $U$, we apply \eqref{lemma:mixed3} to $Y$. Let us compute $\esp ( u_{\yy} (X,Y) | Y=v)$ when $v \in (v_{min} - \frac{d_0}{T}, v_{max} - \frac{d_0}{T} )$: 
	
	\begin{eqnarray*}
		\esp ( u_{\yy} (X,Y) | Y=v) &=&  \esp ( u_{\yy} (X,v)) \\
		&=& p_1 u_{\yy} ( v_{min} ,v ) + p_2 \esp ( u_{\yy} (U,v)) + (1-p_1-p_2) u_{\yy} ( v_{max} ,v)  \\
		&=& p_1 (D+ Tv_{min} -Tv-d_0 )  \\
		&+& p_2 \Big[  \int_{v_{min}}^{v+ \frac{d_0}{T}} (D+Tu-Tv-d_0) f_U(u) du + \int_{v+ \frac{d_0}{T}}^{v_{max}}
		(Tu-Tv-d_0) f_U(u) du \Big]  \\
		&+& (1-p_1-p_2) (Tv_{max} -Tv -d_0) \\
		&=& p_1 D + p_2 D F_U \Big( v +  \frac{d_0}{T} \Big) +  p_1 T v_{min} + (1-p_1-p_2) Tv_{max} + p_2 T \esp(U) -Tv -d_0,
	\end{eqnarray*}
	where $F_U(u)$ is the cumulative probability distribution function of the random variable $U$.
	
	By \eqref{lemma:mixed3}, we have
	\begin{equation}\label{k}
		p_1 D + p_2 D F_U \Big( v +  \frac{d_0}{T} \Big) +  p_1 T v_{min} + (1-p_1-p_2) Tv_{max} + p_2 T \esp(U) -Tv -d_0 =k,
	\end{equation}
	for some constant $k$.
	
	Since $F_U(v_{max}) =1$, when we plug $v= v_{max} - \frac{d_0}{T}$, we obtain its value
	
	\begin{equation}\label{valuek}
		k= (p_1+p_2)D + p_1 Tv_{min} - (p_1+p_2) Tv_{max} + p_2 T \esp(U).
	\end{equation}
	On substituting $k$ into \eqref{k} we obtain 
	$$ F_U \Big(v + \frac{d_0}{T} \Big) = 1  - \frac{T(v_{max}-v) -d_0}{p_2 D} .$$ 
	Let $u= v+\frac{d_0}{T}$. Then, $u \in (v_{min} +\frac{d_0}{T} , v_{max} )$ and $F_U(u)= 1 - \frac{T(v_{max}-u)}{p_2D}$. From this we have $u^*=v_{max}-\frac{p_2D}{T}$ is the value such that $F_U(u^*)=0$.
	
	The conclusion is that $U$ is uniformly distributed on the interval $(v_{max}- \frac{p_2D}{T} , v_{max})$, thus
	\begin{equation}\label{esperanzaU}
		\esp(U)=v_{max} - \frac{p_2D}{2T}.
	\end{equation}
	
	In the same manner we can see that $V$ has uniform distribution on the interval $(v_{max} - \frac{d_0}{T} -\frac{q_2 }{T}, v_{max} - \frac{d_0}{T} )$, with expectancy given by
	\begin{equation}\label{esperanzaV}
		\esp(V)=v_{max} - \frac{d_0}{T}- \frac{q_2 D}{2T}.
	\end{equation}
	
	To compute the values of $p_1$ and $p_2$ necessary for the $\varepsilon$-equilibrium, we use \eqref{lemma:mixed2}. We first compute the conditional expectancy of $u_{\yy}(X,Y)$ given $Y$,
	\begin{eqnarray*}
		\esp(u_{\yy}(X,Y) | Y=v_{min} ) &=& p_1 u_{\yy}(v_{min},v_{min}) +p_2 \esp( u_{\yy}(U,v_{min}) )+ (1-p_1-p_2) u_{\yy}(v_{max},v_{min} ) \\
		&=& p_1( D- d_0) + p_2 \Big[  \int_{v_{max} - \frac{p_2D}{T} }^{ v_{max} }  (Tu -Tv_{min} - d_0) f_U(u) \, du   \Big] \\
		& + & (1-p_1-p_2) ( Tv_{max}- Tv_{min}-d_0 ) \\
		&=& p_1( D- d_0) + p_2 (T\esp(U)  -Tv_{min} -d_0 )+ (1-p_1-p_2) ( Tv_{max}- Tv_{min}-d ).
	\end{eqnarray*}
	By \eqref{esperanzaU}, we have
	\begin{equation}\label{condv_min}
		\esp(u_{\yy}(X,Y) | Y=v_{min} ) = p_1D - d_0 +(1-p_1)(Tv_{max} -Tv_{min} ) -p_2^2 \frac{D}{2}.
	\end{equation}
	Computing $\esp (u_{\yy}(X,Y) | Y=V)$ yields
	$$
	\esp ( u_{\yy}(X,Y) | Y=V) =  \int_{ v_{max} - \frac{d_0}{T} -\frac{q_2 D}{T} }^{v_{max} - \frac{d_0}{T}} \esp( u_{\yy}(X,Y) | Y=v ) f_V(v) \, dv .
	$$
	Since we know that the integrand is constant and its value is given by equations \eqref{valuek} and \eqref{esperanzaU}, we directly obtain 
	\begin{equation}\label{condV}
		\esp ( u_{\yy}(X,Y) | Y=V) = (p_1+p_2)D -p_1 (Tv_{max} -Tv_{min}) - p_2^2 \frac{D}{2} .
	\end{equation}
	We are left with the task of determining the expected value of $u_{\yy}(X,Y)$ conditioned on the value $Y= v_{max} - \frac{d_0}{T} + \frac{\varepsilon}{T}$,
	\begin{eqnarray}\label{condvmaxmenos}
		\esp \Big(u_{\yy}(X,Y) | Y= v_{max} - \frac{d_0}{T} + \frac{\varepsilon}{T} \Big) &=& p_1 u_{\yy} \Big(v_{min},   v_{max} - \frac{d_0}{T} + \frac{\varepsilon}{T} \Big)
		+ p_2 \esp \Big( u_{\yy} \Big( U,v_{max} - \frac{d_0}{T} + \frac{\varepsilon}{T} \Big) \Big)  \nonumber \\
		&+& (1-p_1 - p_2) u_{\yy} \Big(v_{max}, v_{max} - \frac{d_0}{T} + \frac{\varepsilon}{T} \Big)  \nonumber\\
		&=& p_1 (D- T(v_{max} -v_{min}) - \varepsilon ) \nonumber \\
		&+& p_2 \int_{v_{max} - \frac{p_2 D}{T}}^{v_{max} } (D-T(v_{max}-u)-\varepsilon) f_U(u) \, du + (1-p_1-p_2) (D- \varepsilon) \nonumber \\
		&=& p_1 (D- T(v_{max} -v_{min}) - \varepsilon ) \nonumber \\
		&+& p_2  ( D- Tv_{max} + T \esp(U) -\varepsilon)  + (1-p_1-p_2) (D- \varepsilon) \nonumber \\
		&=& D -\varepsilon - p_1 T(v_{max}-v_{min}) -p_2^2 \frac{D}{2},
	\end{eqnarray}
	where we used \eqref{esperanzaU} in the last equality.
	
	Lemma \eqref{lemma:mixed2} implies that in order to have an $\varepsilon$-equilibrium, the
	expressions \eqref{condv_min}, \eqref{condV} and \eqref{condvmaxmenos} must be equal. This system of equations has the unique solution 
	$$ p_1 = 1-  \frac{T(v_{max}-v_{min})-d_0}{D}, \qquad p_2= \frac{T(v_{max}-v_{min})-d_0}{D} , \qquad 1 - p_1 -p_2=0 .$$
	
	We now apply this argument again, to obtain the expectancy of the random variable $u_{\xx} (X,Y)$ conditioned on the values of $X$, as well as the values $q_1,q_2$ necessary to have an $\varepsilon$-equilibrium. In this case, there are many solutions. Indeed, any combination $q_1,q_2$ satisfying $$ 0 \leq q_1,q_2 ,   \qquad q_1 + q_2 = \frac{T(v_{max} - v_{min} )}{D}, \qquad 1-q_1-q_2 = 1- \frac{T(v_{max} - v_{min} )}{D} ,$$
	fulfills equation \eqref{lemma:mixed2}.
	
	Given that the support of $V$ is $\Big( v_{max} - \frac{d_0}{T} - q_2 \frac{D}{T}, \, v_{max} - \frac{d_0}{T} \Big)  \subseteq \Big( v_{min}, v_{max} - \frac{d_0}{T} \Big)$, it is necessary to impose the condition $q_2 \leq \frac{d-d_0}{D}$.
	
	\item \textbf{Case c)}\ \\
	
	From the conditions stated in $c)$, it follows that 
	\begin{eqnarray*}
		B_{\xx}(v)= \left\{ \begin{array}{ll}
			v_{max} &  \textrm{ if } v=v_{min}, \\
			v_{min} & \textrm{ if } v > v_{min}.
		\end{array} \right. 
	\end{eqnarray*}
	
	Intuitively, under hypothesis $c)$, it always happens that $x_T \leq y_T$ for every pair of strategies $v_{\xx}, v_{\yy}$. Equality holds only when $v_{\xx}=v_{max}$ and $v_{\yy}=v_{min}$. 
	
	Similarly, one can check that
	\begin{eqnarray*}
		B_{\yy}(v)= \left\{ \begin{array}{ll}
			v_{min} &  \textrm{ if } v<v_{max}, \\
			v_{max} + \frac{\varepsilon}{T} & \textrm{ if } v = v_{max},
		\end{array} \right. 
	\end{eqnarray*}
	where last case is an $\varepsilon$-best reply.
	
	To find the $\varepsilon$-equilibria, we define $X$ to be a random variable such that
	$$ \mathbb{P} ( X=  v_{min} )= p,  \qquad  \mathbb{P} ( X=  v_{max} )= 1- p ,\quad \text{for some probability $p \in [0,1]$.}$$
	Similarly, we define a random variable $Y$ such that
	$$ \mathbb{P} ( Y=  v_{min} )= q,  \qquad  \mathbb{P} \Big( Y=  v_{min} + \frac{\varepsilon}{T} \Big)= 1- q ,\quad\text{for $q \in [0,1]$.}$$
	
	An $\varepsilon$-equilibrium requires $ \mathbb{E} ( u_{\xx}( v_{min}, Y )   ) = \mathbb{E} (u_{\xx} (v_{\max}, Y )  )$, which is exactly the condition \eqref{lemma:mixed2} when there is no continuous part for $X$.
	
	Since
	$$\mathbb{E} ( u_{\xx}( v_{min}, Y )   )= q u_{\xx} (v_{min}, v_{min}) + (1-q) u_{\xx} \Big( v_{min}, v_{min} + \frac{\varepsilon}{T} \Big) 
	= q d + (1-q) (d + \varepsilon),$$
	and 
	$$\mathbb{E} ( u_{\xx}( v_{max}, Y )   )= q u_{\xx} (v_{max}, v_{min}) + (1-q) u_{\xx} \Big( v_{max}, v_{min} + \frac{\varepsilon}{T} \Big) 
	= q \Big( \frac{D}{2} \Big) + (1-q) (\varepsilon), $$
	we can equalize the two equations and solve to obtain $q= \frac{2d}{D}$. Note that \eqref{speedbound} implies that $0<q <1$.

	Similarly, we should have $\mathbb{E} ( u_{\yy}( X, v_{min} )   ) = \mathbb{E} \Big( u_{\yy} \Big( X, v_{min} + \frac{\varepsilon}{T} \Big) \Big)$. The explicit formulas being
	$$ \mathbb{E} ( u_{\yy}( X, v_{min} ) ) = p u_{\yy}(v_{min}, v_{min} ) +(1-p) u_{\yy} ( v_{max}, v_{min} ) = p (D-d) + (1-p) \frac{D}{2},$$
	and 
	$$ \mathbb{E} \Big( u_{\yy} \Big(  X, v_{min} + \frac{\varepsilon}{T} \Big) \Big)= p \, u_{\yy} \Big(v_{min}, v_{min} + \frac{\varepsilon}{T} \Big) +
	(1-p) u_{\yy} \Big(v_{max}, v_{min} + \frac{\varepsilon}{T} \Big) = p (D-d - \varepsilon) + (1-p)(D- \varepsilon).$$
	
	Matching and solving the two yields $0<1-p= \frac{2 \varepsilon}{D}<1$. 
	
	\item \textbf{Case d)}\ \\
	
	Assume that player $\yy$ chooses strategy $v_{\yy}$ satisfying \eqref{speedbound}. Then
	\begin{equation}\label{proofTheo2:eq1}
		u_{\xx}(x_0,v_{\xx},y_0,v_{\yy}) = d_{\xx}( T v_{\xx}, y_0 + T v_{\yy} ).  
	\end{equation}
	
	By assumption $d)$, we have
	$$ T( v_{\xx} - v_{\yy} ) \leq T (v_{max}- v_{min} ) < d_{\xx} (x_0,y_0) = y_0,$$
	so $ y_0 + T v_{\yy} -  T v_{\xx}  > 0 $ for every $v_{\xx}, v_{\yy}$.
	Then,
	\eqref{proofTheo2:eq1} is equal to 
	$$ u_{\xx}(x_0,v_{\xx},y_0,v_{\yy}) = y_0 + T(  v_{\yy} - v_{\xx} ) , $$
	which is bounded by 
	$$ u_{\xx}(x_0,v_{\xx},y_0,v_{\yy}) \leq  y_0 + T(  v_{\yy} - v_{min} )  = u_{\xx}(x_0,v_{min},y_0,v_{\yy} ).$$
	
	We conclude that $v_{\xx} = v_{min}$ is $\xx$'s best reply to any strategy $v_{\yy}$ played by $\yy$.
	
	Similarly, if $\xx$ chooses strategy $v_{\xx}$, then
	\begin{equation}\label{proofTheo2:eq2}
		u_{\yy}(x_0,v_{\xx},y_0,v_{\yy}) = d_{\yy}( T v_{\xx}, y_0 + T v_{\yy} ).  
	\end{equation}
	
	We have already proven that  $y_0 + T v_{\yy} -  T v_{\xx}  > 0$ for every $v_{\xx}, v_{\yy}$, so \eqref{proofTheo2:eq2} is equal to
	$$ u_{\yy}(x_0,v_{\xx},y_0,v_{\yy}) = D + T v_{\xx} - y_0 - T v_{\yy} .$$
	
	We can bound the last expression by
	$$ u_{\yy}(x_0,v_{\xx},y_0,v_{\yy}) =   D- y_0 + T( v_{\xx} - v_{\yy})  \leq D- y_0  + T( v_{\xx} - v_{min}) = u_{\yy}(x_0,v_{\xx},y_0,v_{min} ).  $$
	
	This implies $v_{\yy} = v_{min}$ is $\yy$'s best reply to any strategy $v_{\xx}$ played by $\xx$. The conclusion is that $(v_{min}, v_{min})$ is the unique Nash equilibrium.
\end{itemize}

\begin{flushright}
	$\square$
\end{flushright}

\textbf{Proof of Theorem \ref{theo:repeat}}: \\

First, note that $d_0>d$ implies $N \equiv 0$, and the result holds trivially. 

Assume that $0< d_0 < d$, and suppose that $0< d_k < d$ for some $k \geq 0$. Then, the strategies $(U,v_{min}), (U,V), (U, v_{max} - \frac{d_k}{T} + \frac{\varepsilon}{T} ), (v_{min}, v_{min}), (v_{min}, V)$ lead to $0 <d_{k+1} < 0$ with probability one.

If the strategies of $\xx$ and $\yy$ are instead $(v_{min}, v_{max} - \frac{d_k}{T} - \frac{\varepsilon}{T} )$, then 
$d_{k+1}= d+ \varepsilon$. We can uniformly bound from below the probability that the players adopt these strategies by
$$ \mathbb{P} \Big( (X,Y)= \Big( v_{min}, v_{max} - \frac{d_k}{T} - \frac{\varepsilon}{T}  \Big) \Big)= \Big( 1 - \frac{d-d_k}{D} \Big) \Big( 1- \frac{d_k}{D} \Big) \geq \Big(1 - \frac{d}{D} \Big),  \quad \forall \enspace 0 < d_k < d,$$
where the inequality can be obtained by calculus (or by noting that this probability is an inverted parabola, as a function of $d_k$). Therefore, 
$$ \mathbb{P} (N > k ) \leq \mathbb{P} (G >k)  ,$$
where $G$ is a geometric random variable with parameter $1- \frac{d}{D}$, and the result follows. 

Finally, assume that $d_0=d$. If players $\xx$ and $\yy$ choose $(v_{min}, v_{min})$, then $d_1=d$. Any other strategy choice yields
$d_1 \neq d$.

Define $M = \min \{ n \geq 1:  d_n \neq d \}$. By the above remark, $M$ has geometric distribution on the natural numbers with parameter $ 1-  \Big(  1- \frac{2 \varepsilon}{D}  \Big) \Big( \frac{2d}{D}  \Big) $. After $M$ trials, we are on the conditional space where $\xx$ and $\yy$ do not play $(v_{min}, v_{min})$, instead they choose
\begin{eqnarray*}
	(v_{max}, v_{min} ) & & \textrm{ with probability }  \frac{\Big( \frac{2 \varepsilon}{D}  \Big) \Big( \frac{2d}{D} \Big)  }{ 1-  \Big(  1- \frac{2 \varepsilon}{D}  \Big) \Big( \frac{2d}{D}  \Big)  }, \\
	(v_{min}, v_{min} +\frac{\varepsilon}{T} ) & &  \textrm{ with probability } \frac{\Big( 1- \frac{2 \varepsilon}{D}  \Big) \Big( 1 - \frac{2d}{D} \Big)  }{ 1-  \Big(  1- \frac{2 \varepsilon}{D}  \Big) \Big( \frac{2d}{D}  \Big)  },       \\
	(v_{max}, v_{min} + \frac{\varepsilon}{T} ) & &  \textrm{ with probability } \frac{\Big(  \frac{2 \varepsilon}{D}  \Big) \Big( 1 - \frac{2d}{D} \Big)  }{ 1-  \Big(  1- \frac{2 \varepsilon}{D}  \Big) \Big( \frac{2d}{D}  \Big)  }.
\end{eqnarray*}
The first election leads to $d_{M+1}=0$, while the other two give $d_{M+1} >d$. This concludes the proof.

\begin{flushright}
	$\square$
\end{flushright}

\end{document}